\newtheorem{theorem}{Theorem}[section]
\newtheorem{example}{Example}[section]
\newcommand{\ds}{\displaystyle}
\numberwithin{equation}{section}
\begin{document}
\title{A Generalized Method for the Darboux Transformation}

\author{Tuncay Aktosun\\
Department of Mathematics\\
University of Texas at Arlington\\
Arlington, TX 76019-0408, USA\\
\\
Mehmet Unlu\\
Department of Mathematics\\
Recep Tayyip Erdogan University\\
53100 Rize, Turkey}

\date{}
\maketitle

\noindent {\bf Abstract}:
A method is presented to obtain the change
in the potential and in the relevant wavefunction of a linear system of ordinary differential equations containing a spectral parameter,
when that linear system is perturbed and a finite number of discrete eigenvalues
are added to or removed from the spectrum. Some explicit formulas are derived for those changes
by introducing certain fundamental linear integral equations for the corresponding unperturbed and perturbed linear systems.
This generalized method is applicable in a unified manner on a wide class of linear systems. This is in contrast to
the standard method for a Darboux transformation, which is specific to the particular linear system on which it applies.
A comparison is provided in some special cases between this generalized method and the standard method
for the Darboux transformation. In particular, when a bound state is added to the
discrete spectrum, some Darboux transformation formulas are presented
for the full-line Schr\"odinger equation, where those formulas resemble the Darboux transformation formulas 
for the half-line Schr\"odinger equation. The theory presented is illustrated
with some explicit examples.

\vfill

\noindent{\textbf{Mathematics Subject Classification (2020):} 34L05 34L25 34A55 37K35 81U40}\\
\textbf{Keywords:} Darboux transformation, inverse problems, fundamental integral equation, resolvent, resolvent kernel, Marchenko equation,
Gel'fand--Levitan equation, Schr\"odinger equation

\newpage
\section{Introduction}
\label{section1}

In this paper we consider the Darboux transformation for the spectral problem  $\mathcal L\Psi = \lambda\Psi,$
where $\mathcal L$ is a linear ordinary differential operator acting on some function space, $x$ is the independent spacial
variable, and $\lambda$ is the spectral parameter.
The spectrum of $\mathcal L$ consists of all $\lambda$-values for which there exists a
nonzero solution $\Psi(\lambda,x),$ which is called a wavefunction.
The operator $\mathcal L$ usually contains a function $u(x)$ as a coefficient, which is called the potential.
There are infinitely many choices of wavefunctions for a given spectral problem.
Certain specific wavefunctions are more convenient to use in the description of the Darboux transformation,
and the choice depends on the particular scattering or
spectral data used as input to the Darboux transformation.

Let us
perturb the operator $\mathcal L$ to $\tilde{\mathcal L}$ in a such a way that the continuous spectra of
$\mathcal L$ and $\tilde{\mathcal L}$ coincide, while their discrete spectra differ by a set consisting of
a finite number of eigenvalues. 
Such a perturbation is known as a Darboux transformation after the
French mathematician Gaston Darboux \cite{D1882}. Thus,
as a result of the perturbation, the unperturbed spectral problem
$\mathcal L\Psi = \lambda \Psi$ is changed to the perturbed problem $\tilde{\mathcal L} \tilde\Psi= \lambda \tilde\Psi,$
and we have the transformations $\mathcal L \mapsto
\tilde{\mathcal L}$, $u(x) \mapsto \tilde u(x),$ and
$\Psi(\lambda,x) \mapsto \tilde\Psi(\lambda,x)$ for the operator,
the potential, and the wavefunction, respectively.
Note that we use a tilde to denote the corresponding perturbed quantity.

We can view the Darboux transformation as having two parts: the first is at the potential level and the second is at the wavefunction
level. At the potential level, the Darboux transformation consists of determining $\tilde u(x)$
in terms of $u(x)$ and the quantities evaluated at the discrete $\lambda$-eigenvalues appearing
in the perturbation. At the wavefunction level, it involves the
determination of $\tilde\Psi(\lambda,x)$ in terms of $\Psi(\lambda,x)$ and the quantities evaluated
at the discrete $\lambda$-eigenvalues related to the perturbation.

A Darboux transformation naturally occurs as a special case in inverse problems. In an inverse problem, a
potential is recovered from the corresponding scattering or spectral data set. In the case
of a Darboux transformation, the solution to the inverse problem involves the recovery of the change
in the potential from the corresponding change
occurring only in the discrete spectrum.
A Darboux transformation often yields a closed-form solution to the 
corresponding inverse problem because the data set used contains a finite-rank
perturbation of the spectral data. Furthermore, a Darboux transformation 
may yield a closed-form explicit solution to a differential equation in order
to produce
other explicit solutions to the same differential equation or to other related differential equations.
For various applications and a historical account of
Darboux transformations, we refer the reader to many available references such as
\cite{CS1989,C1955,DT1979,GHZ2005,MS1991,RS2002}
and the references therein.

The Darboux transformation formulas available in the literature, which we refer to as the standard
Darboux transformation, are designed to be applicable to certain specific differential equations
and to particular spacial domains for those differential equations. For example,
the standard Darboux transformation for the full-line Schr\"odinger equation
and the standard Darboux transformation for the half-line Schr\"odinger equation
are obtained differently, and the corresponding Darboux transformation formulas
are also different. We elaborate on this difference at the ends of
Sections~\ref{section2}, \ref{section3}, and \ref{section4}.

Our main goal in this paper is to present a method to obtain the Darboux
transformation for a wide class of linear differential operators by using the same process. Because we use the same procedure on different
differential equations and on different spacial domains, we refer to our method as the generalized method for
the Darboux transformation. We accomplish our goal by introducing a corresponding
fundamental linear integral equation for each of the unperturbed and perturbed linear operators,
and this procedure enables us
to describe the Darboux transformation from a unified viewpoint, regardless of the particular
differential equation and of its spacial domain under consideration.
We use the difference between the kernels of the
perturbed and unperturbed integral operators as input to
the Darboux transformation. We then show that the perturbed fundamental linear integral equation can be transformed
into a modified linear integral equation with a separable kernel. Because
of the separability in its kernel,
the solution to the modified
integral equation is obtained explicitly by using the methods from linear algebra. Using that solution, we are able to construct
the Darboux transformation formulas both
at the potential and wavefunction levels.

We recall that
two of the main methods used to solve various inverse problems, namely the Marchenko method \cite{AK2001,CS1989,L1986,M1987} and the
Gel'fand--Levitan method \cite{AK2001,CS1989,GL1955,L1986,M1987}, both involve the use of a linear integral equation.
The Marchenko integral equation can be formulated on the interval $(x,+\infty)$
when the input data set is related to the measurements at $x=+\infty,$ and it can be formulated
on $(-\infty,x)$ when the input data set is related to the measurements at $x=-\infty.$
The Gel'fand--Levitan integral equation can be formulated
on the interval $(0,x)$ when the input data set is related to the measurements at $x=0.$
Thus, a fundamental integral equation arises naturally as
the Marchenko integral equation on $(x,+\infty)$ and on $(-\infty,x)$
and also as the Gel'fand--Levitan integral equation on $(0,x).$
In our generalized method, we use a fundamental integral equation for
each of the unperturbed and perturbed problems rather than
a fundamental integral equation used only for the perturbed problem in the
Marchenko and Gel'fand--Levitan methods.

The generalized method introduced here uses the basic ideas developed in \cite{AV2009} on
the interval $(x,+\infty)$ and in the second author's doctoral thesis \cite{U2014}
on the intervals $(-\infty,x)$ and $(0,x).$
We remark that the proofs and the details of the analysis
on $(x,+\infty)$ and those on $(-\infty,x)$ and $(0,x)$ are not trivially related. In order to emphasize the
unified aspect of our approach, we provide a summary of the relevant results in \cite{AV2009} on $(x,+\infty)$
without any proofs, and we present the results on $(-\infty,x)$ and on $(0,x)$ with some brief proofs.
At various places in our paper, we demonstrate the unified aspect of our approach in the three aforementioned intervals.

Our paper is organized as follows. In Section~\ref{section2} we describe our generalized method
for the Darboux transformation in operator notation, and we present the
corresponding results on the interval $(x,+\infty)$ without any proofs.
In that section, we also provide a comparison between
our generalized method and the standard method in the case of the full-line Schr\"odinger operator
when a bound state is added to the spectrum.
In Section~\ref{section3} our generalized method
is described on the interval $(-\infty,x)$ with some details and brief proofs, and
a comparison is provided between our generalized method and the standard
method when a bound state is added to the spectrum of full-line Schr\"odinger operator.
In Section~\ref{section4} we describe our generalized method
on the interval $(0,x),$ and we also present
a comparison between our generalized method and the standard method
when a bound state is added to the spectrum of the half-line Schr\"odinger operator
with the Dirichlet boundary condition and also with a non-Dirichlet boundary condition.
Finally, in Section~\ref{section5} we illustrate the results presented
in the earlier sections with some explicit examples, demonstrate
how our generalized method works, and clarify some subtle
points in our method.

\section{The generalized method on the interval $(x, +\infty)$}
\label{section2}

In this section we present our generalized method for the Darboux transformation for the linear system $\mathcal L\Psi = \lambda \Psi$
with the help of the fundamental linear integral equations on the interval $(x,+\infty)$ both for the unperturbed and perturbed linear
systems. We describe our method in operator notation, which provides the appropriate
preliminaries in such a way that our generalized approach can be applied
on the interval $(-\infty,x)$ in Section~\ref{section3}
and on $(0,x)$ in Section~\ref{section4}.
At the end of the section we compare our generalized approach with the standard
approach for the full-line Schr\"odinger equation
when a bound state is added to the spectrum.

We summarize our generalized approach for the Darboux transformation on $(x,+\infty)$ as follows.
For the unperturbed problem $\mathcal L\Psi =\lambda \Psi,$ we introduce the fundamental integral equation
\begin{equation}
\label{2.1}
\alpha(x,y) + \omega(x,y)+ \int_x^\infty dz \, \alpha(x,z) \, \omega(z,y)=0, \qquad x<y,
\end{equation}
where $\alpha(x,y)$ is the unknown. We remark that the nonhomogeneous term and the
integral kernel in \eqref{2.1} coincide, and hence \eqref{2.1} can be viewed as a 
Marchenko integral equation \cite{AK2001,AW2006,CS1989,DT1979,L1986,M1987,N1983}
on the interval $(x,+\infty).$
The integral equation \eqref{2.1} usually arises 
by taking the Fourier transform of a relationship involving the scattering data and certain wavefunctions
for the unperturbed problem $\mathcal L\Psi = \lambda \Psi.$ The quantity $\alpha$ is related to the Fourier
transform of a specific wavefunction $\Psi,$ and $\omega$ is related to the Fourier transform of some scattering and spectral data set
$S(\lambda)$ associated with the unperturbed operator $\mathcal L.$ We assume that the integral equation \eqref{2.1}
is uniquely solvable in some function space. We can write \eqref{2.1} in operator form as
\begin{equation}
\label{2.2}
\alpha+\omega+\alpha\,\Omega=0,
\end{equation}
where the integral operator $\Omega$ acts from the right.

Corresponding to the operator $\Omega$ in \eqref{2.2}, let us define the operator $R$ as
\begin{equation}
\label{2.3}
R:=(I+\Omega )^{-1} -I,
\end{equation}
where $I$ is the identity operator. We then have
\begin{equation*}
I+R=(I+\Omega)^{-1}.
\end{equation*}
We refer to $R$ as the resolvent operator for \eqref{2.2}. Using \eqref{2.3} we express the unique
solution $\alpha(x,y)$ to \eqref{2.2} as
\begin{equation}
\label{2.4}
\alpha=-\omega\, (I+R).
\end{equation}
Let us use $r(x;y,z)$ to denote the kernel of the integral operator $R.$ We refer to it as the resolvent kernel.
On the interval $(x,+\infty),$ we note that \eqref{2.4} is equivalent to
\begin{equation}
\label{2.5}
\alpha(x,y)=-\omega(x,y)-\int_x^\infty dz\,\omega(x,z)\,r(x;z,y), \qquad x<y.
\end{equation}
Without much loss of generality, we consider \eqref{2.2} when the integral operator $\Omega$ is $N\times N$
matrix valued and also $J$-selfadjoint in the sense that
\begin{equation}
\label{2.6}
\Omega=J\,\Omega^\dagger J, \quad \omega(y,z)=J\,\omega(z,y)^\dagger J,
\end{equation}
where the dagger denotes the matrix adjoint (complex conjugate and matrix transpose), $N$ is a positive integer,
and $J$ is an $N\times N$ selfadjoint involution, i.e.
\begin{equation}
\label{2.7}
J^{-1}=J, \quad J^\dagger=J.
\end{equation}
For instance, $J$ may be chosen as $I,$ $-I,$ or a diagonal block matrix of the form
\begin{equation}
\label{2.8}
J:=
\begin{bmatrix}
I_j & 0\\
0 & -I_{N-j}
\end{bmatrix},
\end{equation}
where $I_j$ is the $j \times j$ identity matrix for some $1\leq j< N.$ The use of $J$ allows us
to apply our method on a larger class of spectral problems.

In the scalar case, i.e. when $N=1,$ from \eqref{2.6} it follows that $\omega(y,z)$ is real valued and symmetric
in $y$ and $z,$ i.e. $\omega(y,z)=\omega(z,y).$ In the matrix case, i.e. when $N\geq 2,$ it follows that the
diagonal entries $\omega_{jj}(y,z)$ for $1\leq j\leq N$ are real valued and symmetric in $y$ and $z,$ i.e.
$\omega_{jj}(y,z)=\omega_{jj}(z,y).$ In the matrix case, the corresponding off-diagonal entries $\omega_{jk}(y,z)$ and
$\omega_{kj}(y,z)$ are related to each other as $\omega_{kj}(y,z)=\omega_{jk}(z,y)^*$ or
$\omega_{kj}(y,z)=-\omega_{jk}(z,y)^*,$ depending on the negative sign in the involution matrix $J$
appearing in \eqref{2.8}. Note that we use an asterisk to denote complex conjugation.

We assume that the operator $\Omega$ acts on the complex Hilbert space $\mathcal H^2$ of $N \times N$
matrix-valued measurable functions $M:(x,+\infty) \to \mathbb C^{N \times N}$ whose matrix norms belong
to $L^2(x,+\infty).$

In analogy
with the integral equation \eqref{2.2} for the unperturbed problem, we have the fundamental integral equation
associated with the perturbed problem $\tilde{\mathcal L}\, \tilde\Psi = \lambda \, \tilde\Psi,$
and it is given by
\begin{equation}
\label{2.9}
\tilde\alpha(x,y)+\tilde{\omega}(x,y)+\int_x^\infty dz \, \tilde\alpha(x,z)\, \tilde\omega(z,y)=0, \qquad x<y,
\end{equation}
which is represented in operator form as
\begin{equation}
\label{2.10}
\tilde\alpha + \tilde\omega + \tilde\alpha\, \tilde\Omega= 0.
\end{equation}

In the Darboux transformation, the perturbation corresponds to the case where the integral
operators $\tilde\Omega$ and $\Omega$ differ by a finite-rank
operator. Let us use $FG$ to denote that finite-rank perturbation operator and use $f(x)\, g(y)$ to denote the
corresponding kernel. Thus, we have
\begin{equation}
\label{2.11}
\tilde\Omega=\Omega+F G, \quad \tilde\omega(x,y)=\omega(x,y)+f(x)\, g(y).
\end{equation}
Since we deal with $N\times N$ matrix-valued quantities, when $N\geq2$ the operators
$F$ and $G$ do not necessarily commute, and hence in general $f(x)g(y) \neq g(y)f(x).$
We remark that the formulation of the perturbation as in \eqref{2.11} is valid
regardless whether discrete eigenvalues are added to or removed from the spectrum.
In fact, those two cases can be handled in the same manner by simply changing the sign of
the quantity specified by either of $f(x)$ or $g(y).$

The goal in the Darboux transformation is to determine the perturbed potential $\tilde u$ and the perturbed
wavefunction $\tilde\Psi$ when we know the unperturbed quantities $u$ and $\Psi$ as well as the perturbation
$\tilde\Omega-\Omega.$ Our generalized approach to obtain the Darboux transformation consists of the following steps:

\begin{enumerate}
\item[\text{\rm(a)}] Knowing the solution $\alpha(x,y)$ to the unperturbed integral equation \eqref{2.1}, and also
knowing the perturbation quantities $f(x)$ and $g(y)$, we construct the intermediate quantities $n(x)$ and $q(y)$ as
\begin{equation}
\label{2.12}
n(x):=f(x)+\int_x^\infty dz\, \alpha(x,z)\,f(z),
\end{equation}
\begin{equation}
\label{2.13}
q(y):=g(y)+\int_y^\infty dz\, g(z)\,J\alpha(y,z)^\dagger J,
\end{equation}
where we recall that $J$ is the involution matrix appearing in \eqref{2.7}.

\item[\text{\rm(b)}] Next, by using $\alpha(x,y)$ and $q(y)$ we introduce the auxiliary quantity $\tilde g(x,y)$ defined as
\begin{equation}
\label{2.14}
\tilde g(x,y):=q(y)+\int_x^y dz\, q(z)\, \alpha(z,y).
\end{equation}

\item[\text{\rm(c)}] Then, in terms of $f(x)$ and $\tilde g(x,y)$
we introduce the $N\times N$ matrix-valued quantity $\Gamma(x)$  given by
\begin{equation}\label{2.15}
\Gamma(x):=I+\int_x^\infty dz\,\tilde g(x,z)\,f(z),
\end{equation}
where $I$ is the $N\times N$ identity matrix.

\item[\text{\rm(d)}] We then show that the solution $\tilde\alpha(x,y)$ to the perturbed integral equation \eqref{2.9} can be
expressed in terms of the already known quantities $\alpha(x,y),$ $n(x),$
$\tilde g(x,y),$ $\Gamma(x),$ and we have
\begin{equation}\label{2.16}
\tilde\alpha(x,y)=\alpha(x,y)-n(x)\,\Gamma(x)^{-1}\,\tilde g(x,y), \qquad x<y.
\end{equation}

\item[\text{\rm(e)}] The change $\tilde u(x)-u(x)$ in the potential is obtained from the quantity $\tilde\alpha(x,y)-\alpha(x,y)$
in the limit $y\to x^+,$ and we use $\tilde\alpha(x,x)-\alpha(x,x)$ to denote that limit. From \eqref{2.14} and \eqref{2.16}, we see that
$\tilde u(x)-u(x)$ is determined by the auxiliary quantities $n(x),$ $q(x),$ and $\Gamma(x)$ as
\begin{equation}\label{2.17}
\tilde\alpha(x,x)-\alpha(x,x)=-n(x)\,\Gamma(x)^{-1}\,q(x).
\end{equation}
The specific process of obtaining
$\tilde u(x)-u(x)$ from $\tilde\alpha(x,x)-\alpha(x,x)$ depends on the particular unperturbed linear problem
$\mathcal L\Psi = \lambda \Psi,$ but that specific process is usually straightforward. The resulting equality
expressing $\tilde u(x)-u(x)$ in terms of $\tilde\alpha(x,x)-\alpha(x,x)$ constitutes the Darboux transformation at the potential level.

\item[\text{\rm(f)}] Let us recall that the unperturbed and perturbed wavefunctions $\Psi$ and $\tilde\Psi$  are 
usually related to the quantities $\alpha(x,y)$ and $\tilde\alpha(x,y),$ respectively, via a Fourier transformation.
Under a Fourier transformation, the change $\tilde\Psi(\lambda,x)-\Psi(\lambda,x)$ in the wavefunction
is readily expressed in terms of the already constructed quantity $\tilde\alpha(x,y)-\alpha(x,y).$ The resulting equation
corresponds to the Darboux transformation at the wavefunction level. 
We mention that the specific process of obtaining
$\tilde\Psi(\lambda,x)-\Psi(\lambda,x)$ from $\tilde\alpha(x,y)-\alpha(x,y)$ depends on the particular unperturbed linear problem
$\mathcal L\Psi = \lambda \Psi,$ but that specific process is usually straightforward. 

\end{enumerate}

We have outlined the above steps for our generalized approach to the Darboux transformation on the interval $(x,+\infty)$ 
so that they are readily
applicable also on $(-\infty,x)$ and on $(0,x).$ In fact, the steps listed above can be described in operator form without
any specific reference to any of the intervals $(x,+\infty),\,(-\infty,x),$ and $(0,x).$ Thus, we refer to our
approach as a generalized method for the Darboux transformation.

Although our generalized approach can be described in a unified way in operator notation for the three intervals $(x,+\infty),\,(-\infty,x),$ and $(0,x),$
it is still relevant and important to present some proofs and details on each of those three intervals separately.
This is because those proofs and details are not necessarily trivial extensions from any one of those three intervals.
Furthermore, as already indicated, the choice of a relevant specific wavefunction usually depends on the
particular interval $(x,+\infty),$ $(-\infty,x),$ or $(0,x).$

Let us remark that the quantity $\tilde g(x,y)$ defined in step (b) in our generalized Darboux method can equivalently
be evaluated as
\begin{equation}
\label{2.18}
\tilde g(x,y)=g(y)+\int_x^\infty dz\, g(z)\, r(x;z,y),
\end{equation}
where we recall that $g(y)$ is the perturbation quantity appearing in \eqref{2.11} and \eqref{2.13}, and $r(x;z,y)$ is
the resolvent kernel for the operator $R$ given in \eqref{2.3}. Although \eqref{2.14} and \eqref{2.18}
are equivalent, in order to use \eqref{2.18} we first need to construct the
resolvent kernel $r(x;z,y).$ In the next theorem we show that $r(x;z,y)$ can explicitly be evaluated in terms of the unique solution
$\alpha(x,y)$ to the integral equation \eqref{2.1}.
We omit the proof and refer the reader to Proposition~2.1 and Theorem~2.2
of \cite{AV2009} for that proof.
We remark that the proof itself is
not trivial, but the resulting formula is relevant and important.

\begin{theorem}
\label{theorem2.1}
Assume that \eqref{2.2} is uniquely solvable in the aforementioned Hilbert space $\mathcal H^2$ and that the operator $\Omega$ satisfies \eqref{2.6}.
Then, the resolvent operator $R$ appearing in \eqref{2.3} and the corresponding kernel $r(x;y,z)$ satisfy
\begin{equation}
\label{2.19}
R=J\,R^\dagger J, \quad r(x;y,z)=J\,r(x;z,y)^\dagger J,
\end{equation}
where we recall that $J$ is the involution matrix appearing in \eqref{2.6}. Furthermore, $r(x;y,z)$ is expressed explicitly
in terms of the solution $\alpha(x,y)$ to \eqref{2.1} as
\begin{equation}
\label{2.20}
r(x;y,z)=\begin{cases}\alpha(y,z)+\ds\int_x^y ds\,J\,\alpha(s,y)^\dagger J\,\alpha(s,z), \qquad x<y<z,\\
J\,\alpha(z,y)^\dagger J+\ds\int_x^z ds\,J\,\alpha(s,y)^\dagger J\,\alpha(s,z),\qquad x<z<y.\end{cases}
\end{equation}
\end{theorem}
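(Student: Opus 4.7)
The theorem has two parts: the $J$-selfadjointness relation \eqref{2.19} for the resolvent, and the explicit representation \eqref{2.20} of the kernel $r(x;y,z)$ in terms of the solution $\alpha$. My plan is to treat these separately, handling the operator-theoretic part first and then reducing the kernel formula to the verification that an explicit candidate satisfies the defining resolvent equation.

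For the first part I would work entirely in operator notation. Starting from $(I+\Omega)(I+R)=I$, taking adjoints gives $(I+R^\dagger)(I+\Omega^\dagger)=I$. Invoking \eqref{2.6} in the form $\Omega^\dagger = J\Omega J$ (which uses $J^{-1}=J$ and $J^\dagger=J$ from \eqref{2.7}), this becomes $(I+R^\dagger)(I+J\Omega J)=I$; multiplying on both sides by $J$ and using $J^2=I$ once more converts it to $(I+JR^\dagger J)(I+\Omega)=I$. Since $(I+\Omega)$ is invertible with unique inverse $I+R$, we conclude $R = JR^\dagger J$, which is the operator identity in \eqref{2.19}; passing to kernels yields $r(x;y,z)=J\,r(x;z,y)^\dagger J$.

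For the second part my plan is verification by uniqueness. Let $\tilde r(x;y,z)$ denote the right-hand side of \eqref{2.20}, defined piecewise according to whether $y<z$ or $z<y$. A short calculation, using $(AB)^\dagger=B^\dagger A^\dagger$ together with $J^\dagger=J$ and $J^2=I$, shows that applying the involution $T\mapsto JT^\dagger J$ to the $y<z$ branch and swapping the roles of $y$ and $z$ produces the $z<y$ branch. Thus the two cases are forced to be mutually consistent by the symmetry already proved, and it suffices to verify, for $y<z$, the kernel form of the resolvent identity $R+\Omega+\Omega R=0$, namely
\begin{equation*}
\tilde r(x;y,z)+\omega(y,z)+\int_x^\infty \omega(y,s)\,\tilde r(x;s,z)\,ds=0.
\end{equation*}
The verification proceeds by substituting the piecewise formula for $\tilde r$, splitting $\int_x^\infty$ as $\int_x^z+\int_z^\infty$, interchanging orders of integration, and invoking \eqref{2.1} at the appropriate lower limits $y$ and $s$ in order to collapse the resulting double integrals. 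The $J$-selfadjointness \eqref{2.6} of $\omega$ is used to convert terms of the shape $\int \omega(y,s)\,J\alpha(t,s)^\dagger J\,ds$ into expressions recognizable from the adjoint of \eqref{2.1}.

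The main obstacle is precisely this verification: keeping track of the case splits (both for $\tilde r$ and for the nested $\int_x^{\min(y,z)}J\alpha(\cdot,\cdot)^\dagger J\,\alpha(\cdot,\cdot)$ corrections), while ensuring that every cancellation is justified by an instance of the fundamental equation at the correct lower limit. The delicate point is that $\alpha(y,\cdot)$ satisfies an equation on $(y,\infty)$, not on $(x,\infty)$, so the shift of the lower limit with the first argument has to be balanced exactly by the $\int_x^y$ correction term in \eqref{2.20}; this is the step where the particular form of the correction is forced, and it is also the source of the assertion in the paper that the proof in \cite{AV2009} is nontrivial.
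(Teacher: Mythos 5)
The paper does not actually prove Theorem~\ref{theorem2.1}; it refers to \cite{AV2009}. The closest in-paper argument is the proof of the analogue Theorem~\ref{theorem3.1} on $(-\infty,x)$, and against that your plan for the first part is essentially identical and correct: take adjoints of one of the two resolvent identities, conjugate by $J$, use $\Omega=J\Omega^\dagger J$, and invoke uniqueness of the inverse of $I+\Omega$. Your strategy for the second part --- define the candidate $\tilde r$ by the right-hand side of \eqref{2.20} and check that it satisfies the kernel form of a resolvent identity, then conclude by uniqueness --- is also the paper's strategy, and your outline of the computation for one ordering of the arguments (split the integral at the intermediate point, interchange orders, collapse via \eqref{2.1}) matches the paper's treatment of \eqref{3.8}.

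The gap is your claim that the second ordering comes for free from the involution. It is true that the two branches of \eqref{2.20} are exchanged by $T\mapsto JT^\dagger J$ together with swapping $y$ and $z$, so the candidate kernel satisfies $\tilde r(x;y,z)=J\,\tilde r(x;z,y)^\dagger J$. But to invoke uniqueness you must establish one \emph{full} operator identity, say $(I+\Omega)(I+\tilde R)=I$, i.e.
\begin{equation*}
\tilde r(x;y,z)+\omega(y,z)+\int_x^\infty ds\,\omega(y,s)\,\tilde r(x;s,z)=0 \quad\text{for \emph{all} } y,z\in(x,+\infty).
\end{equation*}
If you verify this only for $y<z$ and then apply $T\mapsto J T^\dagger J$, what you obtain for $y>z$ is the kernel identity of $(I+\tilde R)(I+\Omega)=I$, i.e. the term $\int \tilde r(x;y,s)\,\omega(s,z)\,ds$ appears with the factors in the opposite order. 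Since $\tilde R$ is not yet known to be the resolvent, $\Omega\tilde R=\tilde R\Omega$ is not available, so you end up with the ``upper-triangular part'' of one identity and the ``lower-triangular part'' of the other, and neither identity holds on the whole domain; uniqueness of the inverse cannot be applied. Note also that the restricted identity for $y<z$ does not by itself determine $\tilde r$ on $\{y<z\}$, because the integral over $s$ ranges over all of $(x,+\infty)$ and couples the two branches. This is precisely why the paper states that the second case (\eqref{3.9} in the $(-\infty,x)$ setting) is ``much more challenging'' and defers it to \cite{U2014}: it requires a separate direct computation, which your proposal omits.
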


The remark made after \eqref{2.8} on the $J$-selfadjointness of the operator kernel $\omega(y,z)$ also applies to the
resolvent kernel $r(x;z,y)$ appearing in the second equality in \eqref{2.19}. Comparing the second equalities
of \eqref{2.6} and \eqref{2.19}, we have the following observations on $r(x;z,y).$ In the scalar case, i.e. when $N=1,$
the quantity $r(x;z,y)$ is real valued and satisfies the symmetry property $r(x;z,y)=r(x;y,z).$ In the matrix case, i.e.
when $N\geq 2,$ the diagonal entries $r_{jj}(x;z,y)$ are real and symmetric in $z$ and $y,$ i.e. we have $r_{jj}(x;z,y)=r_{jj}(x;y,z)$
for $1\leq j\leq N.$ On the other hand, the corresponding off-diagonal entries $r_{jk}(x;z,y)$ and $r_{kj}(x;z,y)$
satisfy $r_{kj}(x;z,y)=r_{jk}(x;y,z)^*$ or $r_{kj}(x;z,y)=-r_{jk}(x;y,z)^*,$ depending on the
appearance of the negative sign in the involution matrix $J.$

Let us also remark that, for a given pair $\alpha(x,y)$ and $\omega(x,y),$ we may have more than one function that can
be substituted for $r(x;z,y)$ in \eqref{2.5} so that \eqref{2.5} is satisfied. However, not all such functions act as a resolvent
kernel for the operator $\Omega$ appearing in \eqref{2.3} and \eqref{2.6}, but only one of them is the resolvent
kernel for $\Omega.$ The correct function to be substituted for $r(x;z,y)$ in \eqref{2.5} must be the resolvent kernel of
$\Omega,$ and that resolvent kernel is uniquely determined by the solution $\alpha(x,y)$ to \eqref{2.1}, as indicated in
Theorem~\ref{theorem2.1}. The elaboration on this issue is provided later
in Example~\ref{example5.2}.

The following theorem is a key result in the implementation of the steps in our generalized approach to the Darboux
transformation. It is used to construct $\tilde\alpha(x,y)$ in terms of the unperturbed quantities and the perturbation.
We present it without a proof, and we refer the reader to \cite{AV2009} for its proof.

\begin{theorem}
\label{theorem2.2}
Under the finite-rank perturbation $\tilde\Omega-\Omega$ given in \eqref{2.11}, the perturbed integral equation
\eqref{2.9} involving $\tilde\alpha(x,y)$ and $\tilde\omega(x,y)$ can be transformed into an integral equation with
a separable kernel. In fact, that transformed integral equation has the kernel $f(y)\,\tilde g(x,z),$ where $f$
and $\tilde g$ are the quantities appearing in \eqref{2.11} and \eqref{2.14}, respectively. That transformed
integral equation is given by
\begin{equation}
\label{2.21}
\tilde\alpha(I+F\,\tilde G)=\alpha-f\tilde g,
\end{equation}
where $F$ is the operator appearing in the first equality of \eqref{2.11} and
$\tilde G$ is the operator defined
as $\tilde G:=G(I+R),$
with $R$ and $G$ being the
operators appearing in \eqref{2.3} and \eqref{2.11}, respectively.
The kernel $f(y)\,\tilde g(x,z)$ of the integral operator $F\tilde G$ is separable in $y$ and $z,$
where the appearance of the parameter $x$ does not affect the separability. Consequently,
the transformed integral equation \eqref{2.21} is explicitly solvable by the methods of linear algebra, and
the solution $\tilde\alpha(x,y)$ to \eqref{2.9} is given by
\begin{equation*}
\tilde\alpha(x,y)=\alpha(x,y)-n(x)\left[I+\ds\int_x^\infty ds\, \tilde g(x,s)\, f(s)
\right]^{-1}\tilde g(x,y),
\end{equation*}
where $\alpha(x,y)$ and $n(x)$ are the quantities appearing in
\eqref{2.1} and \eqref{2.12}, respectively.
\end{theorem}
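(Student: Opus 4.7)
The plan is to eliminate the part of $\tilde\Omega$ we already understand (namely $\Omega$, via its resolvent) from the perturbed equation \eqref{2.9}, reducing it to an integral equation of rank $N$ --- hence with separable kernel --- and then solving by an obvious $N\times N$ ansatz.

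First, substitute the decomposition \eqref{2.11} into the operator form \eqref{2.10} of the perturbed equation to obtain $\tilde\alpha(I+\Omega)+\omega+fg+\tilde\alpha\,FG=0$, and multiply on the right by $(I+R)=(I+\Omega)^{-1}$ from \eqref{2.3}. The term $\omega(I+R)$ equals $-\alpha$ by \eqref{2.4}. At the kernel level, $(fg)(I+R)$ has kernel $f(u)\,\bigl[g(v)+\int_x^\infty g(w)\,r(x;w,v)\,dw\bigr]$, which by the equivalent expression \eqref{2.18} equals $f(u)\,\tilde g(x,v)$; thus $(fg)(I+R)=f\,\tilde g$. By the same computation the kernel of $F\tilde G:=FG(I+R)$ is $f(y)\,\tilde g(x,z)$, separable in $y$ and $z$. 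Collecting the pieces gives exactly \eqref{2.21}: $\tilde\alpha(I+F\tilde G)=\alpha-f\,\tilde g$.

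To solve \eqref{2.21}, I would try the ansatz $\tilde\alpha(x,y)=\alpha(x,y)-M(x)\,\tilde g(x,y)$ with an unknown $N\times N$ coefficient $M(x)$. Substituting into \eqref{2.21} and evaluating the two auxiliary integrals using \eqref{2.12} and \eqref{2.15},
\begin{equation*}
\int_x^\infty\alpha(x,z)\,f(z)\,dz=n(x)-f(x),\qquad \int_x^\infty \tilde g(x,z)\,f(z)\,dz=\Gamma(x)-I,
\end{equation*}
the identity collapses to $\{n(x)-f(x)-M(x)\,\Gamma(x)\}\,\tilde g(x,y)=-f(x)\,\tilde g(x,y)$, which forces $M(x)\,\Gamma(x)=n(x)$, so $M(x)=n(x)\,\Gamma(x)^{-1}$. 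This yields the claimed closed-form expression for $\tilde\alpha(x,y)$.

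The main obstacle is the careful translation between abstract operator identities and kernel identities on the $x$-dependent interval $(x,+\infty)$; in particular, the identification of the kernel of $FG(I+R)$ as $f(y)\,\tilde g(x,z)$ crucially relies on the explicit resolvent formula \eqref{2.18}, which is the substantive content of Theorem~\ref{theorem2.1}. A secondary point is the invertibility of $\Gamma(x)$ needed to isolate $M(x)$; this should be treated as a consequence of the unique solvability hypothesis imposed on \eqref{2.9}, since an invertible $\Gamma(x)$ is equivalent to $I+F\tilde G$ being invertible on the rank-$N$ range where the obstruction lives.
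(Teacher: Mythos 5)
Your proposal is correct and follows essentially the same route as the paper: the paper omits the proof of Theorem~\ref{theorem2.2} (citing \cite{AV2009}) but proves the analogous Theorem~\ref{theorem3.2} on $(-\infty,x)$ by exactly your steps --- right-multiplying the perturbed equation by $(I+R)$, identifying the kernel of $FG(I+R)$ as $f(y)\,\tilde g(x,z)$, and then writing $\tilde\alpha-\alpha=p(x)\,\tilde g(x,y)$ and solving $p\,(I+\tilde g F)=-(f+\alpha F)$ to get $p=-n\,\Gamma^{-1}$, which is your $M(x)=n(x)\,\Gamma(x)^{-1}$ up to sign. The only cosmetic difference is that you posit the form of $\tilde\alpha-\alpha$ as an ansatz while the paper reads it off directly from the rearranged equation \eqref{3.28}; both are sound.
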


Having described our generalized approach on $(x,+\infty)$ to the Darboux transformation, let us briefly
illustrate the difference between the standard approach and our generalized approach on a specific linear system, namely for the scalar
Schr\"odinger equation on the full line 
\begin{equation}
\label{2.22}
-\ds\frac{d^2\psi(k,x)}{dx^2}+u(x)\,\psi(k,x)=k^2\,\psi(k,x),\qquad -\infty<x<+\infty,
\end{equation}
where the
linear operator $\mathcal L$ is given by
\begin{equation*}
\mathcal L=-\ds\frac{d^2}{dx^2}+u(x),
\end{equation*}
the appropriate wavefunction to use is the left Jost solution $f_l(k,x)$ 
satisfying 
the spacial asymptotics
\begin{equation*}
f_l(k,x)=e^{ikx}[1+o(1)],\qquad x\to +\infty,
\end{equation*}
with $k$ being the spectral parameter related to $\lambda$ as $\lambda=k^2.$ 
In the standard approach, one must use not only the wavefunction $f_l(k,x)$
but also the right Jost solution
$f_r(k,x)$ to \eqref{2.22} with the spacial asymptotics
\begin{equation}
\label{2.23}
f_r(k,x)=e^{-ikx}[1+o(1)],\qquad x\to -\infty.
\end{equation}
If a bound state at $k=\text{i}\kappa$
with the dependency constant $\gamma$ is to be added to the discrete spectrum, that bound state with energy $-\kappa^2$
must be added below the already existing bound-state energies.
This is a limitation in the standard method and it is needed to ensure that $\eta(x)>0,$
where $\eta(x)$ is the quantity defined as
\begin{equation}
\label{2.24}
\eta(x):=f_l(i\kappa,x)+\gamma f_r(i\kappa,x).
\end{equation}
We then obtain \cite{AK2001,CS1989,DT1979,MS1991} the perturbed 
Schr\"odinger equation
\begin{equation*}
-\ds\frac{d^2\tilde\psi(k,x)}{dx^2}+\tilde u(x)\,\tilde\psi(k,x)=k^2\,\tilde\psi(k,x),\qquad -\infty<x<+\infty,
\end{equation*}
with the
potential $\tilde u(x)$ specified as
\begin{equation}
\label{2.25}
\tilde u(x)=u(x)-2\ds\,\frac{d^2\ln\left( \eta(x)\right)}{dx^2},
\end{equation}
and the perturbed Jost solutions $\tilde f_l(k,x)$ and $\tilde f_r(k,x)$ as
\begin{equation}
\label{2.26}
\tilde f_l(k,x)=\ds\frac{1}{i(k+i\kappa)}\left[f'_l(k,x)-\ds\frac{\eta'(x)}{\eta(x)}\,f_l(k,x)\right],
\end{equation}
\begin{equation}
\label{2.27}
\tilde f_r(k,x)=\ds\frac{i}{k+i\kappa}\left[f'_r(k,x)-\ds\frac{\eta'(x)}{\eta(x)}\,f_r(k,x)\right],
\end{equation}
where the prime denotes the $x$-derivative and 
 the dependency constant $\gamma$ is given by
\begin{equation*}
\gamma:=\ds\frac{\tilde f_l(i\kappa,x)}{\tilde f_r(i\kappa,x)}.
\end{equation*}
We remark that the dependency constant $\gamma$ can be expressed \cite{AK2001} as
\begin{equation*}
\gamma=\ds\frac{2\kappa\,T(i\kappa)}{c_l^2},
\end{equation*}
where $T(k)$ is the transmission coefficient corresponding to the unperturbed potential $u(x)$ in \eqref{2.22}.
The norming constant $c_l$ is related to the perturbed left Jost solution $\tilde f_l(k,x)$ as
\begin{equation*}
c_l:=\left[ \ds\int_{-\infty}^\infty dx\,\tilde f_l(i\kappa,x)^2   \right]^{-1/2}.
\end{equation*}

Let us now briefly present our generalized approach and make a contrast with the standard approach.
In our generalized method, the second equality in \eqref{2.11} is given by
\begin{equation*}
\tilde\omega(x,y)-\omega(x,y)=c_l^2\,e^{-\kappa(x+y)},
\end{equation*}
and hence the perturbation quantities $f(x)$ and $g(y)$ can be chosen as
\begin{equation*}
f(x)=c_l\,e^{-\kappa x},\quad g(y)=c_l\,e^{-\kappa y}.
\end{equation*}
The corresponding intermediate quantities $n(x),$ $q(y),$ $\Gamma(x),$ and $\tilde g(x,y)$ appearing 
in \eqref{2.12}, \eqref{2.13}, \eqref{2.15}, and \eqref{2.14}, respectively, are evaluated as
\begin{equation}
\label{2.28}
n(x)=c_l\,f_l(i\kappa,x),\quad q(y)=c_l\,f_l(i\kappa,y),\quad 
\Gamma(x)=1+c_l^2 \ds\int_x^\infty dz\,f_l(i\kappa,z)^2,
\end{equation}
\begin{equation*}
\tilde g(x,y)=c_l\,f_l(i\kappa,y)+c_l \ds\int_x^\infty dz\,f_l(i\kappa,z)\,\alpha(z,y),
\end{equation*}
and we have
\begin{equation*}
\alpha(x,y)=\ds\frac{1}{2\pi}\ds\int_{-\infty}^\infty dk\left[f_l(k,x)-e^{ikx}
\right]e^{-iky},\quad 
\tilde\alpha(x,y)=\ds\frac{1}{2\pi}\ds\int_{-\infty}^\infty dk\left[\tilde f_l(k,x)-e^{ikx}
\right]e^{-iky},
\end{equation*}
\begin{equation*}
f_l(k,x)=e^{ikx}+  \ds\int_x^\infty dy\,    \alpha(x,y)\,e^{iky},\quad 
\tilde f_l(k,x)=e^{ikx}+  \ds\int_x^\infty dy\,   \tilde\alpha(x,y)\,e^{iky}.
\end{equation*}
Then, the counterpart of \eqref{2.25} is obtained as
\begin{equation}
\label{2.29}
\tilde u(x)-u(x)=-2\ds\,\frac{d^2\ln \left(\Gamma(x)\right)}{dx^2},
\end{equation}
where $\Gamma(x)$ is the quantity appearing in the last equality of \eqref{2.28},
and the counterpart of \eqref{2.26} is given by
\begin{equation}
\label{2.30}
\tilde f_l(k,x)-f_l(k,x)=-\ds\frac{c_l^2\,f_l(i\kappa,x)  \int_x^\infty dy\, f_l(k,y)\,
f_l(i\kappa,y)}{1+c_l^2  \int_x^\infty dz\, f_l(i\kappa,z)^2}.
\end{equation}
With the help of
\begin{equation*}
-f''_l(k,x)+u(x)\,f_l(k,x)=k^2\,f_l(k,x),\quad
-f''_l(i\kappa,x)+u(x)\,f_l(i\kappa,x)=-\kappa^2 \,f_l(i\kappa,x),
\end{equation*}
we can express the integral appearing in the numerator in \eqref{2.30} as
\begin{equation}
\label{2.31}
\int_x^\infty dy\, f_l(k,y)\,
f_l(i\kappa,y)=\ds\frac{1}{k^2+\kappa^2}\left[f_l'(k,x)\,f_l(i\kappa,x)-f_l(k,x)\,f_l'(i\kappa,x)
\right],
\end{equation}
and hence \eqref{2.30} can also be written in the equivalent form as
\begin{equation*}
\tilde f_l(k,x)=\left[1+\ds\frac{c_l^2\,f_l(i\kappa,x) \,f'_l(i\kappa,x)}
{(k^2+\kappa^2)\left[1+c_l^2  \int_x^\infty dz\, f_l(i\kappa,z)^2\right]}
\right] f_l(k,x)
-\ds\frac{c_l^2\,f_l(i\kappa,x)^2 \,f'_l(k,x)}
{(k^2+\kappa^2)\left[1+c_l^2  \int_x^\infty dz\, f_l(i\kappa,z)^2\right]}.
\end{equation*}

We remark that our generalized formulas \eqref{2.29} and \eqref{2.30} use
only the relevant wavefunction $f_l(k,x)$ and the bound-state
parameters $\kappa$ and $c_l$ whereas
the standard formulas \eqref{2.25} and \eqref{2.26} not only use
the relevant wavefunction $f_l(k,x)$ but also the information on the unperturbed right Jost solution $f_r(k,x)$ and the unperturbed transmission
coefficient $T(k).$
Furthermore, contrary to the standard method, in our generalized
method there is no restriction that the bound state at $k=i\kappa$ must be added to the discrete spectrum below
the already existing bound-state energy levels.
This is because the quantity $\Gamma(x)$ defined in \eqref{2.28} already satisfies $\Gamma(x)>0.$

\section{The generalized method on the interval $(-\infty,x)$}
\label{section3}

Our primary goal in this section is to develop our generalized method on the interval $(-\infty,x)$ 
for the Darboux transformation for the linear system $\mathcal L\Psi = \lambda \Psi,$ and this is done by introducing
and using
the fundamental linear integral equations for the corresponding unperturbed and perturbed linear
systems.
At the end of the section we compare our generalized method with the standard
method for the full-line Schr\"odinger equation
when a bound state is added to the spectrum.

Let us recall that we are interested in obtaining the Darboux transformation formulas at the potential and
wavefunction levels when the fundamental integral equation for the unperturbed system is given by
\begin{equation}
\label{3.1}
\alpha(x,y) + \omega(x,y)+ \int_{-\infty}^x dz \, \alpha(x,z) \, \omega(z,y)=0, \qquad y<x,
\end{equation}
where we emphasize that the integration is on the interval $(-\infty,x)$ rather than $(x,+\infty)$ used in \eqref{2.1}. Note that both
\eqref{2.1} and \eqref{3.1} correspond to \eqref{2.2} written in operator notation, which is consistent with
the fact that we refer to our method as a generalized method for the Darboux transformation. In describing our
method on $(-\infty,x),$ we follow the description outlined in
Section~\ref{section2} on $(x,+\infty),$ and we provide the motivation for the steps and some brief details wherever appropriate.

Our generalized method is applied on the interval $(-\infty,x)$ as follows. As indicated in \eqref{2.4} and \eqref{2.5},
we first express the unique solution $\alpha(x,y)$ to \eqref{3.1} as
\begin{equation}
\label{3.2}
\alpha(x,y)=-\omega(x,y)-\int_{-\infty}^x dz\,\omega(x,z)\,r(x;z,y), \qquad y<x,
\end{equation}
where $r(x;z,y)$ corresponds to the kernel of the resolvent $R$ appearing in \eqref{2.4}. The resolvent kernel on the
interval $(x,+\infty)$ is explicitly expressed in terms of the unique solution $\alpha(x,y)$ to \eqref{2.1}, and that result is
presented in Theorem~\ref{theorem2.1}. In a similar way, we express the resolvent kernel $r(x;z,y)$ on the interval $(-\infty,x)$
in terms of the unique solution $\alpha(x,y)$ to \eqref{3.1}, and our result presented in the next theorem. We remark
that the proof of the next theorem is not a trivial extension of the proof of Theorem~\ref{theorem2.1}. 
We present a relatively brief proof of our next theorem on $(-\infty,x),$ and
we refer the reader to Proposition~3.1.1 and Theorem~3.1.2 of \cite{U2014} for the details of the proof. 

\begin{theorem}
\label{theorem3.1}
Assume that \eqref{3.1} is uniquely solvable for $\alpha(x,y)$ in the aforementioned Hilbert space $\mathcal H^2.$ Also suppose that the operator $\Omega$ with
the kernel $\omega(x,y)$ appearing in \eqref{3.1} satisfies \eqref{2.6} and that $\Omega$ is related to the resolvent
$R$ with the kernel $r(x;z,y)$ as in \eqref{2.4}. We then have the following:
\begin{enumerate}
\item[\text{\rm(a)}] The resolvent $R$ and the kernel $r(x;z,y)$ satisfy \eqref{2.19}.

\item[\text{\rm(b)}] The kernel $r(x;z,y)$ is explicitly expressed in terms of the solution $\alpha(x,y)$ to \eqref{3.1} as
\begin{equation}
\label{3.3}
r(x;z,y)=\begin{cases}\alpha(z,y)+\ds\int_z^x ds\,J\,\alpha(s,z)^\dagger J\,\alpha(s,y), \qquad y<z<x,\\
J\,\alpha(y,z)^\dagger J+\ds\int_y^x ds\,J\,\alpha(s,z)^\dagger J\,\alpha(s,y),\qquad z<y<x,\end{cases}
\end{equation}
where we recall that $J$ is the involution matrix appearing in \eqref{2.6}.
\end{enumerate}
\end{theorem}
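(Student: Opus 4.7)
The plan is to follow the same two-step strategy as the proof of Theorem \ref{theorem2.1}, while re-deriving each step on the interval $(-\infty,x).$ Part (a) is purely operator-algebraic and carries over from the $(x,+\infty)$ case without change: using $J^{2}=I,$ $J^{\dagger}=J,$ and the hypothesis $\Omega=J\Omega^{\dagger}J,$ I would write $I+\Omega=J(I+\Omega^{\dagger})J,$ invert to obtain $(I+\Omega)^{-1}=J\bigl((I+\Omega)^{-1}\bigr)^{\dagger}J,$ and subtract $I=JIJ$ to conclude $R=JR^{\dagger}J.$ Translating this operator identity to kernels yields the second equality in \eqref{2.19}. Since the argument only uses the algebraic relations \eqref{2.6}--\eqref{2.7} and the definition \eqref{2.3} of $R,$ it is insensitive to the interval on which $\Omega$ is realised.

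For part (b), the plan is to denote by $\rho(x;z,y)$ the candidate function defined by the right-hand side of \eqref{3.3} and to verify that $\rho$ satisfies the defining equation for the resolvent kernel on $(-\infty,x),$
\begin{equation*}
\rho(x;z,y)+\omega(z,y)+\int_{-\infty}^{x}ds\,\omega(z,s)\,\rho(x;s,y)=0, \qquad y,z<x,
\end{equation*}
which is the kernel form of $R+\Omega+\Omega R=0.$ The unique solvability of \eqref{3.1} in $\mathcal H^{2}$ assumed in the theorem will then force $\rho=r.$

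To check this identity I would treat the two sub-cases $y<z<x$ and $z<y<x$ of \eqref{3.3} separately. In each sub-case the outer integral over $s\in(-\infty,x)$ splits at $s=y$ into two pieces on which $\rho(x;s,y)$ is given by the first or the second branch of \eqref{3.3}, and the inner integrals appearing in \eqref{3.3} then produce double integrals that must be rearranged by Fubini. The emerging single integrals are collapsed by substituting \eqref{3.1}, while the double integrals are reduced by substituting the $J$-transpose form of \eqref{3.1}, namely
\begin{equation*}
J\alpha(x,y)^{\dagger}J+\omega(y,x)+\int_{-\infty}^{x}dz\,\omega(y,z)\,J\alpha(x,z)^{\dagger}J=0,\qquad y<x,
\end{equation*}
which is obtained by daggering \eqref{3.1} and sandwiching with $J$ while using the $J$-selfadjointness of $\omega.$ After these substitutions the remaining terms cancel pairwise and the required identity drops out.

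The main obstacle is the bookkeeping of the case analysis together with the interchange of integration order: one must track several pieces per sub-case and insert \eqref{3.1} (respectively its $J$-transpose) precisely at the right term. This is also where the argument departs non-trivially from the $(x,+\infty)$ proof of Theorem \ref{theorem2.1}, since both the limits of integration and the position at which $J$ appears are reversed throughout, so the cancellations must be re-derived rather than transcribed from the earlier proof.
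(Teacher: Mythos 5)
Your overall strategy coincides with the paper's: part (a) is the same operator-algebraic manipulation (the paper phrases it via uniqueness of the solution of $R+\Omega+\Omega\,R=0$ rather than by inverting $I+\Omega=J(I+\Omega^{\dagger})J$ directly, but these are interchangeable), and for part (b) you, like the paper, verify that the candidate kernel satisfies a resolvent integral equation and then invoke unique solvability. The one structural difference is which resolvent equation you verify: the paper checks the kernel form of $R+\Omega+R\,\Omega=0$ (equations \eqref{3.8}--\eqref{3.9}, with the integral $\int_{-\infty}^x ds\,r(x;z,s)\,\omega(s,y)$ and the split at $s=z$), while you check the kernel form of $R+\Omega+\Omega\,R=0$ (integral $\int_{-\infty}^{x}ds\,\omega(z,s)\,\rho(x;s,y)$, split at $s=y$). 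Since the candidate \eqref{3.3} satisfies $\rho(x;z,y)=J\,\rho(x;y,z)^{\dagger}J,$ the two equations are equivalent, and the net effect of your choice is merely to swap which of the two sub-cases is the easy one.

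The gap is in the assertion that, after inserting \eqref{3.1} and its $J$-transposed form, the remaining terms ``cancel pairwise.'' That is true in only one of your two sub-cases. For $z<y<x$ everything does collapse: the terms group into $J\alpha(y,z)^{\dagger}J+\omega(z,y)+\int_{-\infty}^{y}ds\,\omega(z,s)\,J\alpha(y,s)^{\dagger}J$ plus $\int_{y}^{x}dt\,\bigl[J\alpha(t,z)^{\dagger}J+\omega(z,t)+\int_{-\infty}^{t}ds\,\omega(z,s)\,J\alpha(t,s)^{\dagger}J\bigr]\alpha(t,y),$ and both brackets vanish by your $J$-transposed identity because $z<y\leq t.$ But for $y<z<x$ that identity applies to the inner integral $\int_{-\infty}^{t}ds\,\omega(z,s)\,J\alpha(t,s)^{\dagger}J$ only on the portion $t\in(z,x)$ of the outer range $(y,x),$ and the term $\int_{-\infty}^{y}ds\,\omega(z,s)\,J\alpha(y,s)^{\dagger}J$ cannot be collapsed at all since $y<z.$ After the first round of substitutions one is left with
\begin{equation*}
\alpha(z,y)+\omega(z,y)+\int_{-\infty}^{y}ds\,\omega(z,s)\,J\alpha(y,s)^{\dagger}J
+\int_{y}^{z}ds\,\omega(z,s)\,\alpha(s,y)
+\int_{y}^{z}dt\left[\int_{-\infty}^{t}ds\,\omega(z,s)\,J\alpha(t,s)^{\dagger}J\right]\alpha(t,y),
\end{equation*}
where the surviving factors of $\alpha$ carry the integration variable in their first slot, so neither \eqref{3.1} nor its $J$-transpose applies and no pairwise cancellation occurs. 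This is precisely the sub-case that the paper itself describes as ``much more challenging'' and defers to Theorem~3.1.2 of \cite{U2014}; a genuinely additional argument, not supplied in your plan (nor in the paper's own sketch), is needed to close it.
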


\begin{proof}
We only provide an outline
for the proof of (a) and refer the reader to Proposition~3.1.1 of \cite{U2014} for the details. With
the help of the first equality in \eqref{2.3}, we obtain the two operator equations
\begin{equation}
\label{3.4}
R+\Omega+R\,\Omega=0,
\end{equation}
\begin{equation}
\label{3.5}
R+\Omega+\Omega\,R=0.
\end{equation}
By taking the adjoint and then applying $J$ on both sides in \eqref{3.5}, we get
\begin{equation}
\label{3.6}
JR^\dagger J+J\,\Omega^\dagger J+(J\,\Omega^\dagger J)(JR^\dagger J)=0.
\end{equation}
Using the first equality of \eqref{2.6}, we write \eqref{3.6} as
\begin{equation}
\label{3.7}
J R^\dagger J+\Omega+\Omega\,(J R^\dagger J)=0.
\end{equation}
Since \eqref{3.1} is assumed to be uniquely solvable in $\mathcal H^2,$ by comparing \eqref{3.7} with \eqref{3.5},
we see that \eqref{2.19} holds on the interval $(-\infty,x).$ Thus, the proof of (a) is complete. We now turn to the
proof of (b). Since \eqref{3.1} is assumed to be uniquely solvable in $\mathcal H^2,$ the corresponding operator
equation \eqref{2.2} is also uniquely solvable on $(-\infty,x),$ and hence the solution $R$ to \eqref{3.4} is unique.
Thus, it suffices to prove that the quantity $r(x;z,y)$ expressed in \eqref{3.3} satisfies the integral equations
\begin{equation}
\label{3.8}
r(x;z,y)+ \omega(z,y)+ \int_{-\infty}^x ds\, r(x;z,s)\, \omega(s,y)=0, \qquad y<z<x,
\end{equation}
\begin{equation}
\label{3.9}
r(x;z,y)+ \omega(z,y)+ \int_{-\infty}^x ds\, r(x;z,s)\, \omega(s,y)=0, \qquad z<y<x.
\end{equation}
We only give the proof for \eqref{3.8}. The proof for \eqref{3.9} is much more challenging, and we refer the
reader to Theorem~3.1.2 of \cite{U2014} for that
proof. For the proof of \eqref{3.8} we proceed as follows. Using $\int_{-\infty}^x = \int_{-\infty}^z+\int_z^x$  in
\eqref{3.8}, we write the left-hand side of \eqref{3.8} as
\begin{equation}
\label{3.10}
r(x;z,y)+\omega(z,y)+\int_{-\infty}^z ds\, r(x;z,s)\, \omega(s,y)+\int_z^x ds\, r(x;z,s)\, \omega(s,y).
\end{equation}
We use the first and second lines of \eqref{3.3} in the integrals $\int_{-\infty}^x$ and $\int_z^x$ in \eqref{3.10}, respectively.
Then, we find that the left-hand side of \eqref{3.8} can be written as
\begin{equation}
\label{3.11}
\begin{split}
\alpha(z,y)&+\int_z^x ds\, J\,\alpha(s,z)^\dagger J\, \alpha(s,y)+ \omega(z,y)\\
&+\int_{-\infty}^z ds\left[\alpha(z,s)+\int_z^x dt\, J\,\alpha(t,z)^\dagger J\, \alpha(t,s) \right]\, \omega(s,y)\\
&+\int_z^x ds \left[J\,\alpha(s,z)^\dagger J +\int_s^x dt\, J\,\alpha(t,z)^\dagger J\, \alpha(t,s) \right]\, \omega(s,y).
\end{split}
\end{equation}
Expanding \eqref{3.11}, we write it in the equivalent form as
\begin{equation}
\label{3.12}
b_1+b_2+b_3+b_4,
\end{equation}
where we have defined
\begin{equation*}
b_1:=\,\alpha(z,y)+\omega(z,y)+\int_{-\infty}^z ds\, \alpha(z,s)\, \omega(s,y),
\end{equation*}
\begin{equation}
\label{3.13}
b_2:=\int_z^x dt\, J\,\alpha(t,z)^\dagger  J\, \alpha(t,y)+\int_z^x dt\, J\,\alpha(t,z)^\dagger J\, \omega(t,y),
\end{equation}
\begin{equation*}
b_3:=\int_{-\infty}^z ds\int_z^x dt\, J\,\alpha(t,z)^\dagger J\, \alpha(t,s)\, \omega(s,y),
\end{equation*}
\begin{equation*}
b_4:=\int_z^x ds \int_s^x dt\, J\,\alpha(t,z)^\dagger J\, \alpha(t,s)\, \omega(s,y).
\end{equation*}
We have $b_1=0$ as a result of \eqref{3.1}. The order of the iterated integrals in $b_3$ and $b_4$ can be
changed to $\int_z^x dt  \int_{-\infty}^z ds$ and $\int_z^x dt \int_z^t ds,$ respectively. Using
$\int_{-\infty}^z+\int_z^t = \int_{-\infty}^t,$ we obtain
\begin{equation}
\label{3.14}
b_3+b_4 =\int_z^x dt\int_{-\infty}^t ds\, J\,\alpha(t,z)^\dagger J\, \alpha(t,s)\, \omega(s,y).
\end{equation}
Next, from \eqref{3.13} and \eqref{3.14}, we get
\begin{equation}
\label{3.15}
b_2+b_3+b_4=\int_z^x dt\, J\,\alpha(t,z)^\dagger J\, \left[\alpha(t,y)+\omega(t,y)+\int_{-\infty}^t ds\, \alpha(t,s)\,
\omega(s,y) \right].
\end{equation}
The integral equation \eqref{3.1} implies that the quantity inside the brackets in \eqref{3.15} vanishes when $t>y,$ and
hence we have $b_2+b_3+b_4=0.$ Since we already have $b_1=0,$ we see that the quantity in \eqref{3.12}
also vanishes and \eqref{3.8} holds. Thus, the proof of (b) is complete.
\end{proof}

Let us remark that, for a given pair $\alpha(x,y)$ and $\omega(x,y),$ the integral equation \eqref{3.2} may also be satisfied if we
substitute other functions for $r(x;z,y)$ instead of the corresponding resolvent kernel, which is uniquely
determined by the solution $\alpha(x,y)$ to \eqref{3.1}. However, those other functions do not satisfy
\eqref{3.8}, and this is later illustrated in Example~\ref{example5.2}. Thus, in \eqref{3.2} the correct function
$r(x;z,y)$ to be used must be the unique resolvent kernel given in \eqref{3.8}.

In the next theorem, on the interval $(-\infty,x)$ we consider the fundamental integral equation \eqref{2.10}
for the perturbed system, i.e. we analyze the linear integral equation
\begin{equation}
\label{3.16}
\tilde\alpha(x,y) + \tilde\omega(x,y)+ \int_{-\infty}^x dz \, \tilde\alpha(x,z) \, \tilde\omega(z,y)=0, \qquad y<x,
\end{equation}
which is the counterpart of \eqref{2.9} given on the interval $(x,+\infty).$ We show that \eqref{3.16} can be transformed into
another linear integral equation with a separable kernel.

\begin{theorem}
\label{theorem3.2}
Assume that \eqref{3.1} is uniquely solvable for $\alpha(x,y)$ in $\mathcal H^2$ and that the operator
$\Omega$ satisfies \eqref{2.6}. Let $R$ be the corresponding resolvent operator as in \eqref{2.4} with the
kernel $r(x;z,y).$ Furthermore, assume that $\tilde\Omega-\Omega$ corresponds to the finite-rank perturbation given in
\eqref{2.11}. Then, we have the following:
\begin{enumerate}
\item[\text{\rm(a)}] The integral equation \eqref{3.16} is transformed into the integral equation
\begin{equation}
\label{3.17}
\tilde\alpha(I+F\,\tilde G)=\alpha-f\tilde g,
\end{equation}
whose kernel $f(y)\,\tilde g(x,z)$ is separable in $y$ and $z.$ Consequently, \eqref{3.17} is
explicitly solvable by using the methods of linear algebra, and the solution $\tilde\alpha(x,y)$ to
\eqref{3.16} is expressed as in \eqref{2.16}, where the relevant quantities are now those defined on the interval
$(-\infty,x)$ rather than on $(x,+\infty).$ In particular, $f(x)$ and $g(y)$ are the quantities appearing in \eqref{2.11}, but $n(x)$
now is the analog of the quantity appearing in \eqref{2.12} and is given by
\begin{equation}
\label{3.18}
n(x):=f(x)+\int_{-\infty}^x dz\, \alpha(x,z)\,f(z),
\end{equation}
the quantity $q(y)$ is the analog of the quantity in \eqref{2.13} and is now given by
\begin{equation}
\label{3.19}
q(y):=g(y)+\int_{-\infty}^y dz\, g(z)\,J\,\alpha(y,z)^\dagger J,
\end{equation}
the quantity $\tilde g(x,y)$ is the analog of the quantity in \eqref{2.14} and is now given by
\begin{equation}
\label{3.20}
\tilde g(x,y):=g(y)+\int_{-\infty}^x dz\, g(z)\, r(x;z,y),
\end{equation}
and the quantity $\Gamma(x)$ is the analog of the quantity in \eqref{2.15} and is now given by
\begin{equation}
\label{3.21}
\Gamma(x):=I+\int_{-\infty}^x dz\,\tilde g(x,z)\,f(z).
\end{equation}
\item[\text{\rm(b)}] The solution $\tilde\alpha(x,y)$ to \eqref{3.16} expressed as in \eqref{2.16} can also be written as
\begin{equation}
\label{3.22}
\tilde\alpha(x,y)=\alpha(x,y)-n(x)\left[ I+\int_{-\infty}^x ds\,\tilde g(x,s)\,f(s) \right]^{-1}\tilde g(x,y), \qquad y<x.
\end{equation}
\item[\text{\rm(c)}] The quantity $\tilde g(x,y)$ defined in \eqref{3.20} can be expressed in terms of the
quantities $\alpha(x,y)$ and $q(y)$ appearing in \eqref{3.1} and \eqref{3.19}, respectively, and we have
\begin{equation}
\label{3.23}
\tilde g(x,y)=q(y)+\int_y^x dz\, q(z)\, \alpha(z,y),
\end{equation}
which is the analog of \eqref{2.14} but expressed on the interval $(-\infty,x).$
\end{enumerate}
\end{theorem}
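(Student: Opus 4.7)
The plan is to establish parts (a), (b), (c) in turn, each paralleling the corresponding step in the $(x,+\infty)$ analysis of Theorem~\ref{theorem2.2} but using the $(-\infty,x)$ resolvent kernel supplied by Theorem~\ref{theorem3.1}.

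For part (a), I would write the perturbed equation \eqref{3.16} in operator form $\tilde\alpha(I+\tilde\Omega)=-\tilde\omega$, substitute the finite-rank perturbation $\tilde\Omega=\Omega+FG$ and $\tilde\omega=\omega+fg$, and right-multiply by $(I+R)=(I+\Omega)^{-1}$. Because $(I+\Omega)(I+R)=I$ and $-\omega(I+R)=\alpha$ by \eqref{2.4}, the left-hand side collapses to $\tilde\alpha+\tilde\alpha F\tilde G$ with $\tilde G:=G(I+R)$, whose kernel is exactly \eqref{3.20}, while the right-hand side becomes $\alpha-f\tilde g$. This yields \eqref{3.17}, and the kernel $f(y)\,\tilde g(x,z)$ is manifestly separable in $y$ and $z$, with $x$ entering only as a parameter.

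For part (b), I would exploit the separability through the ansatz $\tilde\alpha(x,y)=\alpha(x,y)-N(x)\,\tilde g(x,y)$ for an unknown $N\times N$ coefficient $N(x)$. Substituting this into \eqref{3.17} and matching coefficients of $\tilde g(x,y)$ forces $N(x)=f(x)+\int_{-\infty}^x \tilde\alpha(x,s)\,f(s)\,ds$. Reinserting the ansatz into this consistency relation yields the closed matrix equation $N(x)\bigl[I+\int_{-\infty}^x \tilde g(x,s)\,f(s)\,ds\bigr]=f(x)+\int_{-\infty}^x \alpha(x,s)\,f(s)\,ds$, whose right-hand side is $n(x)$ of \eqref{3.18} and whose bracket is $\Gamma(x)$ of \eqref{3.21}. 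Inverting gives $N(x)=n(x)\,\Gamma(x)^{-1}$, which is \eqref{3.22}.

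For part (c), I would substitute the explicit resolvent formula \eqref{3.3} into the definition \eqref{3.20}, splitting the integration range $(-\infty,x)$ at $z=y$ so that the $z<y<x$ branch of \eqref{3.3} applies on $(-\infty,y)$ and the $y<z<x$ branch applies on $(y,x)$. Expansion produces four contributions: two single integrals that, combined with $g(y)$, assemble into $q(y)+\int_y^x g(z)\,\alpha(z,y)\,ds$ via \eqref{3.19}, and two iterated double integrals. In the latter, I would interchange the order of integration and apply $\int_{-\infty}^y+\int_y^z=\int_{-\infty}^z$ to merge them into $\int_y^x dz\int_{-\infty}^z ds\,g(s)\,J\,\alpha(z,s)^\dagger J\,\alpha(z,y)$, which is exactly the remainder of $\int_y^x q(z)\,\alpha(z,y)\,dz$ after $q$ is expanded through \eqref{3.19}. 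Collecting pieces then gives \eqref{3.23}.

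The principal obstacle is the bookkeeping in part (c): the matrix order and the $J$-adjoint factors must be tracked correctly while interchanging orders in the iterated integrals, and the two double-integral pieces must be shown to combine precisely, with no surplus or deficit, into the single term required to match $\int_y^x q(z)\,\alpha(z,y)\,dz$. Parts (a) and (b) are essentially operator-algebra manipulations mirroring Theorem~\ref{theorem2.2}, so the substantive analytic content of the proof is concentrated in the domain decomposition of (c), which is where Theorem~\ref{theorem3.1} is genuinely needed.
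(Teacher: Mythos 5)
Your proposal is correct and follows essentially the same route as the paper: part (a) by right-multiplying the perturbed operator equation by $(I+R)$ and identifying $\tilde G=G(I+R)$, part (b) by exploiting the rank-one structure of the transformed equation to reduce to the matrix relation $N(x)\,\Gamma(x)=n(x)$ (the paper's $p(I+\tilde g\,F)=-(f+\alpha\,F)$ with $p=-N$), and part (c) by splitting $\int_{-\infty}^x$ at $z=y$, inserting the two branches of \eqref{3.3}, and interchanging the iterated integrals exactly as in the paper's $b_5,b_6,b_7$ decomposition.
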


\begin{proof}
We present the proof in operator notation in order to emphasize the unified aspect of our method for
the Darboux transformation. Using \eqref{2.11} in the version of \eqref{2.10} for the interval $(-\infty,x),$ we get
\begin{equation*}
\omega+f\,g+\tilde\alpha(I+\Omega+F\,G)=0,
\end{equation*}
which yields
\begin{equation}
\label{3.24}
\tilde\alpha(I+\Omega+F\,G)=-\omega-f\,g.
\end{equation}
By applying on \eqref{3.24} from the right with the operator $(I+R)$ appearing in \eqref{2.3}, we obtain
\begin{equation}
\label{3.25}
\tilde\alpha(I+\Omega)(I+R)+\tilde\alpha\,F\,G\,(I+R)=-\omega(I+R)-f\,g\,(I+R).
\end{equation}
Using \eqref{2.4} and the first equality of \eqref{2.3} in \eqref{3.25}, we have
\begin{equation}
\label{3.26}
\tilde\alpha\left[I+FG\,(I+R)\right]=\alpha-fg\,(I+R).
\end{equation}
Next, we introduce the operator $\tilde G$ and its kernel $\tilde g(x,y)$ by letting
\begin{equation}
\label{3.27}
\tilde G:=G\,(I+R), \quad \tilde g(x,y):=g(y)+\int_{-\infty}^x dz\, g(z)\, r(x;z,y),
\end{equation}
where $r(x;z,y)$ is the kernel expressed in \eqref{3.3}. Note that $f(y)\,\tilde g(x,z)$ corresponds to the
kernel of the operator $F\,\tilde G,$ which is a separable kernel in $y$ and $z.$ Using the first equality of \eqref{3.27} in \eqref{3.26}, we obtain
\eqref{3.17}, which is our key integral equation with a separable kernel. Writing \eqref{3.17} as
\begin{equation}
\label{3.28}
(\tilde\alpha-\alpha)+f\tilde g+\tilde\alpha F\tilde G=0,
\end{equation}
we observe that we can express $\tilde\alpha-\alpha$ as a multiple of $\tilde g$ as
\begin{equation}
\label{3.29}
\tilde\alpha(x,y)-\alpha(x,y)=p(x)\,\tilde g(x,y),
\end{equation}
where $p(x)$ is to be determined. Let us write \eqref{3.29} as
\begin{equation}
\label{3.30}
\tilde\alpha(x,y)=\alpha(x,y)+p(x)\,\tilde g(x,y).
\end{equation}
Using \eqref{3.30} in \eqref{3.28}, we get
\begin{equation*}
(p+f+\alpha\,F+p\,\tilde g\,F)\,\tilde G=0,
\end{equation*}
or equivalently we have
\begin{equation}
\label{3.31}
p\,(I+\tilde g\,F)=-(f+\alpha\,F).
\end{equation}
From \eqref{3.31} we recover $p(x)$ as
\begin{equation}
\label{3.32}
p=-(f+\alpha\,F)(I+\tilde g\,F)^{-1}.
\end{equation}
Comparing the right-hand side of \eqref{3.32} with \eqref{3.18} and \eqref{3.21}, we see that we can write
\eqref{3.32} as
\begin{equation}
\label{3.33}
p(x)=-n(x)\,\Gamma(x)^{-1}.
\end{equation}
Using \eqref{3.33} in \eqref{3.30} we obtain
\begin{equation}
\label{3.34}
\tilde\alpha(x,y)=\alpha(x,y)-n(x)\,\Gamma(x)^{-1}\,\tilde g(x,y), \qquad y<x,
\end{equation}
on the interval $(-\infty,x).$ Hence, the proof of (a) is complete. The proof of (b) directly follows by using
\eqref{3.21} in \eqref{3.34}. Let us now prove (c). Using $\int_{-\infty}^x = \int_{-\infty}^y+\int_y^x$
on the right-hand side of \eqref{3.20}, we write it as
\begin{equation}
\label{3.35}
\tilde g(x,y)=g(y)+\int_{-\infty}^y ds\,g(s)\,r(x;s,y)+\int_y^x ds\,g(s)\,r(x;s,y).
\end{equation}
We use the first and second lines of \eqref{3.3} in the integrals $\int_y^x$ and $\int_{-\infty}^y,$
respectively, in \eqref{3.35}, and we get
\begin{equation*}
\tilde g(x,y)=b_5+b_6+b_7,
\end{equation*}
where we have defined
\begin{equation}
\label{3.36}
b_5:=g(y)+\int_{-\infty}^y ds\,g(s)\,J\,\alpha(y,s)^\dagger J+\int_y^x ds\,g(s)\,\alpha(s,y),
\end{equation}
\begin{equation}
\label{3.37}
b_6:=\int_{-\infty}^y ds \int_y^x dt\,g(s)\,J\,\alpha(t,s)^\dagger J\,\alpha(t,y),
\end{equation}
\begin{equation}
\label{3.38}
b_7:=\int_y^x ds \int_s^x dt\,g(s)\,J\,\alpha(t,s)^\dagger J\,\alpha(t,y).
\end{equation}
The orders of the iterated integrals in \eqref{3.37} and \eqref{3.38} can be changed to $\int_y^x dt \int_{-\infty}^y ds$
and $\int_y^x dt\int_y^t ds,$ respectively. Using $\int_{-\infty}^y+\int_y^t = \int_{-\infty}^t,$ from \eqref{3.37}
and \eqref{3.38} we obtain
\begin{equation}
\label{3.39}
b_6+b_7=\int_y^x dt\int_{-\infty}^t ds\,g(s)\,J\,\alpha(t,s)^\dagger J\,\alpha(t,y).
\end{equation}
From \eqref{3.19} we observe that the sum of the first two terms on the right-hand side \eqref{3.36} is equal to $q(y).$
Interchanging the dummy integration variables $s$ and $t$ in \eqref{3.39}, we then get
\begin{equation*}
b_5+b_6+b_7=q(y)+\int_y^x ds\,g(s)\,\alpha(s,y)+ \int_y^x ds\int_{-\infty}^s dt\,g(t)\,J\,\alpha(s,t)^\dagger J\,\alpha(s,y),
\end{equation*}
or equivalently
\begin{equation}
\label{3.40}
\tilde g(x,y)=q(y)+\int_y^x ds \left[ g(s)+\int_{-\infty}^s dt\,g(t)\,J\,\alpha(s,t)^\dagger J \right]\alpha(s,y).
\end{equation}
From \eqref{3.19} we observe that the quantity inside the brackets in \eqref{3.40} is equal to  $q(x),$ and hence \eqref{3.23} holds.
\end{proof}

We remark that the transformed integral equation \eqref{3.17} looks the same as
the integral equation \eqref{2.21} in operator notation even though the domains of the
corresponding integral operators are different.

The Darboux transformation at the potential and wavefunction levels on $(-\infty,x)$ is obtained with the help of the following theorem.

\begin{theorem}
\label{theorem3.3}
Assume that \eqref{3.1} is uniquely solvable for $\alpha(x,y)$ in $\mathcal H^2$ and that the operator
$\Omega$ satisfies \eqref{2.6}. Also assume that $\tilde\Omega-\Omega$ corresponds to the finite-rank perturbation
given in \eqref{2.11}. Let $\tilde\alpha(x,y)$ be the solution to \eqref{3.16} and let $n(x),$ $\Gamma(x),$ and $\tilde g(x,y)$
be the quantities given in \eqref{3.18}, \eqref{3.21}, and \eqref{3.23}, respectively. Then, the Darboux
transformation at the wavefunction level is obtained from $\tilde\alpha(x,y)-\alpha(x,y),$ which can be
written in terms of $n(x)$, $\Gamma(x),$ and $\tilde g(x,y)$ as
\begin{equation}
\label{3.41}
\tilde\alpha(x,y)-\alpha(x,y)=-n(x)\,\Gamma(x)^{-1}\,\tilde g(x,y).
\end{equation}
Thus, $\tilde\alpha(x,y)$ is explicitly constructed from the unperturbed quantity $\alpha(x,y)$ and
the perturbation quantities $f(x)$ and $g(y)$ appearing in \eqref{2.11}. Furthermore, the Darboux transformation
at the potential level is obtained from $\tilde\alpha(x,x)-\alpha(x,x),$ which can be
written in terms of $n(x)$, $q(x),$ and $\Gamma(x)$ as
\begin{equation}
\label{3.42}
\tilde\alpha(x,x)-\alpha(x,x)=-n(x)\,\Gamma(x)^{-1}\,q(x).
\end{equation}
\end{theorem}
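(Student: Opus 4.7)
The plan is to derive both formulas directly from the results already established in Theorem~\ref{theorem3.2}, treating \eqref{3.41} as essentially a repackaging of \eqref{3.22} and \eqref{3.42} as the diagonal evaluation of \eqref{3.41}.

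First I would prove \eqref{3.41}. Theorem~\ref{theorem3.2}(b) gives
\begin{equation*}
\tilde\alpha(x,y)=\alpha(x,y)-n(x)\left[I+\int_{-\infty}^x ds\,\tilde g(x,s)\,f(s)\right]^{-1}\tilde g(x,y), \qquad y<x,
\end{equation*}
and the quantity inside the brackets is precisely $\Gamma(x)$ by definition \eqref{3.21}. Thus rewriting and subtracting $\alpha(x,y)$ yields \eqref{3.41} immediately. The only thing to verify is that $\Gamma(x)$ is invertible on the range of interest; this is inherited from the assumed unique solvability of \eqref{3.16}, since the derivation in Theorem~\ref{theorem3.2} passed through the inverse $(I+\tilde gF)^{-1}$ in \eqref{3.32}.

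Next I would prove \eqref{3.42} by evaluating \eqref{3.41} in the limit $y\to x^-$. Formula \eqref{3.23} expresses $\tilde g(x,y)$ as
\begin{equation*}
\tilde g(x,y)=q(y)+\int_y^x dz\, q(z)\, \alpha(z,y),
\end{equation*}
and setting $y=x$ collapses the integral to zero, giving $\tilde g(x,x)=q(x)$. Substituting this into \eqref{3.41} evaluated at $y=x$ produces \eqref{3.42}.

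The only mild subtlety—and the place to be a bit careful—is the justification of the diagonal limit $y\to x^-$. One should check that both $\tilde\alpha(x,y)$ and $\alpha(x,y)$ admit well-defined boundary values $\tilde\alpha(x,x)$ and $\alpha(x,x)$ in the sense used in step~(e) of the generalized procedure, and that $\tilde g(x,y)$ is continuous at $y=x$. Continuity of $\tilde g(x,y)$ follows from \eqref{3.23} together with the integrability of $q(z)\,\alpha(z,y)$ near $z=y=x$, while the existence of the diagonal values of $\alpha$ and $\tilde\alpha$ is implicit in the framework (they are precisely the objects from which the potentials are recovered). Once these standard regularity points are granted, \eqref{3.42} follows from \eqref{3.41} by continuity, and the theorem is established. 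No delicate cancellation is required: the hard work was done inside Theorem~\ref{theorem3.2}, and this theorem just reads off the consequences that feed into the Darboux transformation formulas at the wavefunction and potential levels.
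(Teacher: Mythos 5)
Your proposal is correct and follows essentially the same route as the paper: the paper likewise obtains \eqref{3.41} by reading off the result of Theorem~\ref{theorem3.2} (via \eqref{3.33}--\eqref{3.34}) with $\tilde g(x,y)$ in the form \eqref{3.23}, and then derives \eqref{3.42} from the observation $\tilde g(x,x)=q(x)$ by letting $y\to x^-$. Your added remarks on the invertibility of $\Gamma(x)$ and the regularity needed for the diagonal limit are reasonable cautions but do not change the argument.
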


\begin{proof}
We obtain \eqref{3.41} directly from \eqref{3.33}, but by using the alternate expression
for $\tilde g(x,y)$ given in \eqref{3.23}. Note that from \eqref{3.23} it follows that
\begin{equation}
\label{3.43}
\tilde g(x,x)=q(x).
\end{equation}
Letting $y\to x^-$ in \eqref{3.41} and using \eqref{3.43}, we obtain \eqref{3.42}.
\end{proof}
 
We note the similarity between the pair of
equalities \eqref{2.16} and \eqref{2.17} and the pair \eqref{3.41} and \eqref{3.42}.
The four auxiliary quantities defined in \eqref{2.12}--\eqref{2.15} 
also have the same appearance as the corresponding four quantities
defined in \eqref{3.18}--\eqref{3.21}. The difference is that
the former quantities are related 
to the interval $(x,+\infty)$ and the latter quantities are related to $(-\infty,x).$
This is a crucial aspect of our generalized
method.

Having presented our generalized approach on $(-\infty,x)$ to the Darboux transformation, let us briefly
illustrate the difference between the standard method and our generalized method on a specific linear system, namely for the scalar
Schr\"odinger equation on the full line given in \eqref{2.22}. Let us add a bound state at $k=\text{i}\kappa$
with the dependency constant $\gamma$ to \eqref{2.22}. In this case, the relevant wavefunction is
the right Jost solution $f_r(k,x)$ to \eqref{2.22} with the asymptotics in \eqref{2.23}.
In the standard approach, the corresponding Darboux transformation formulas at the potential and wavefunction levels
are given by \eqref{2.25} and \eqref{2.27}, respectively. Our generalized approach in this case yields the following.
On the interval $(-\infty,x),$ the second equality in \eqref{2.11} is given by
\begin{equation}
\label{3.44}
\tilde\omega(x,y)-\omega(x,y)=c_r^2\,e^{\kappa(x+y)},
\end{equation}
where the norming constant
$c_r$ is related to the perturbed right Jost solution $\tilde f_r(k,x)$ as
\begin{equation*}
c_r:=\left[ \ds\int_{-\infty}^\infty dx\,\tilde f_r(i\kappa,x)^2   \right]^{-1/2}.
\end{equation*}
We remark that the dependency constant $\gamma$ is related
to $c_r$ \cite{AK2001} as
\begin{equation*}
\gamma=\ds\frac{c_r^2}{2\kappa\,T(i\kappa)},
\end{equation*}
where we recall that $T(k)$ is the transmission coefficient corresponding to the unperturbed potential $u(x)$ 
appearing in \eqref{2.22}. On the interval $(-\infty,x),$ comparing \eqref{2.11} and \eqref{3.44} we see that
the perturbation quantities $f(x)$ and $g(y)$ can be chosen as
\begin{equation*}
f(x)=c_r\,e^{\kappa x},\quad g(y)=c_r\,e^{\kappa y}.
\end{equation*}
The corresponding intermediate quantities $n(x),$ $q(y),$ $\Gamma(x),$ and $\tilde g(x,y)$ appearing in
\eqref{3.18}, \eqref{3.19}, \eqref{3.21}, and \eqref{3.20}, respectively, are evaluated as
\begin{equation}
\label{3.45}
n(x)=c_r\,f_r(i\kappa,x),\quad q(y)=c_r\,f_r(i\kappa,y),\quad 
\Gamma(x)=1+c_r^2 \ds\int_{-\infty }^x dz\,f_r(i\kappa,z)^2,
\end{equation}
\begin{equation*}
\tilde g(x,y)=c_r\,f_r(i\kappa,y)+c_r \ds\int_{-\infty}^x  dz\,f_r(i\kappa,z)\,\alpha(z,y),
\end{equation*}
and in this case we have
\begin{equation*}
\alpha(x,y)=\ds\frac{1}{2\pi}\ds\int_{-\infty }^\infty dk\left[f_r(k,x)-e^{-ikx}
\right]e^{iky},
\quad
\tilde\alpha(x,y)=\ds\frac{1}{2\pi}\ds\int_{-\infty}^\infty dk\left[\tilde f_r(k,x)-e^{-ikx}
\right]e^{iky},
\end{equation*}
\begin{equation}
\label{3.46}
f_r(k,x)=e^{-ikx}+  \ds\int_{-\infty}^x dy\,\alpha(x,y)\,e^{-iky},\quad 
\tilde f_r(k,x)=e^{-ikx}+  \ds\int_{-\infty}^x dy\,   \tilde\alpha(x,y)\,e^{-iky}.
\end{equation}
The counterpart of \eqref{2.25} is given by
\begin{equation}
\label{3.47}
\tilde u(x)-u(x)=-2\ds\,\frac{d^2\ln \left(\Gamma(x)\right)}{dx^2},
\end{equation}
where $\Gamma(x)$ is the quantity appearing in the last equality of \eqref{3.45},
and the counterpart of \eqref{2.27} is given by
\begin{equation}
\label{3.48}
\tilde f_r(k,x)-f_r(k,x)=-\ds\frac{c_r^2\,f_r(i\kappa,x)  \int_{-\infty}^x dy\, f_r(k,y)\,
f_r(i\kappa,y)}{1+c_r^2  \int_{-\infty}^x dz\, f_r(i\kappa,z)^2}.
\end{equation}
Analogous to \eqref{2.31}, 
we can write the integral appearing in the numerator in \eqref{3.48} as
\begin{equation*}
\int_{-\infty}^x dy\, f_r(k,y)\,
f_r(i\kappa,y)=\ds\frac{1}{k^2+\kappa^2}\left[f_r(k,x)\,f'_r(i\kappa,x)-f'_r(k,x)\,f_r(i\kappa,x)
\right],
\end{equation*}
and hence \eqref{3.48} can be expressed in the equivalent form as
\begin{equation*}
\tilde f_r(k,x)=\left[1-\ds\frac{c_r^2\,f_r(i\kappa,x) \,f'_r(i\kappa,x)}
{(k^2+\kappa^2)\left[1+c_r^2  \int_{-\infty}^x dz\, f_r(i\kappa,z)^2\right]}
\right] f_r(k,x)
+\ds\frac{c_r^2\,f_r(i\kappa,x)^2 \,f'_r(k,x)}
{(k^2+\kappa^2)\left[1+c_r^2  \int_{-\infty}^x dz\, f_r(i\kappa,z)^2\right]}.
\end{equation*}
We remark that \eqref{3.47} and \eqref{3.48} in our generalized method use
only the relevant wavefunction $f_r(k,x)$ and the bound-state
parameters $\kappa$ and $c_r.$ On the other hand,
the Darboux transformation formulas \eqref{2.25} and \eqref{2.27} in the standard method not only use
the relevant wavefunction $f_r(k,x)$ but also the information about the unperturbed left Jost solution $f_l(k,x)$ and the unperturbed transmission
coefficient $T(k).$
Furthermore, as already mentioned in Section~\ref{section2},
in the standard method there is the limitation that the bound state at $k=i\kappa$ must be added below the existing bound-state
energies in the discrete spectrum so that the quantity $\eta(x)$ defined in \eqref{2.24} is positive
for all $x.$ Our generalized method does not have such a limitation because $\Gamma(x)$
appearing in \eqref{3.45} is already positive for all $x.$

\section{The generalized method on the interval $(0,x)$ }
\label{section4}

Let us recall that we are
interested in obtaining the Darboux transformation formulas at the potential and wavefunction levels when the 
linear system $\mathcal L\Psi = \lambda \Psi$ is perturbed by changing only the discrete spectrum with
the addition or removal of a finite number of eigenvalues.
The primary goal in this section is to develop our generalized method on the interval $(0,x)$
for the Darboux transformation.
On the interval $(0,x)$ the relevant wavefunction usually satisfies some appropriate initial conditions at $x=0$ rather than
a spacial asymptotic condition at infinity. At the end of the section, we provide a comparison between our generalized method
and the standard method when they are applied on the half-line Schr\"odinger operator both in the Dirichlet and non-Dirichlet cases. 

Our generalized approach on $(0,x)$ to the Darboux transformation works as follows. 
We have the
unperturbed fundamental integral equation given by
\begin{equation}
\label{4.1}
\alpha(x,y) + \omega(x,y)+ \int_0^xdz \, \alpha(x,z) \, \omega(z,y)=0, \qquad 0<y<x,
\end{equation}
which corresponds to \eqref{2.2} on the interval $(0,x).$ The perturbed fundamental integral equation corresponding
to \eqref{2.10} on $(0,x)$ is given by
\begin{equation}
\label{4.2}
\tilde\alpha(x,y)+\tilde\omega(x,y)+\int_0^x dz \, \tilde\alpha(x,z)\, \tilde\omega(z,y)=0, \qquad 0<y<x.
\end{equation}
Our generalized method on $(0,x)$ involves the same steps outlined in Section~\ref{section2}.
The difference between $\tilde\omega(x,y)$ and $\omega(x,y)$ appearing in \eqref{4.2} and \eqref{4.1}, respectively, is
equal to $f(x)\,g(y),$ as indicated in \eqref{2.11}.
In terms of the unperturbed solution $\alpha(x,y)$ to \eqref{4.1} and the perturbation quantities $f(x)$ and $g(y),$ we construct
the key quantities $n(x),$ $q(y),$ $\tilde g(x,y),$ and $\Gamma(x),$ which are the analogs of the quantities appearing in \eqref{2.12},
\eqref{2.1}, \eqref{2.14}, and \eqref{2.15}, respectively. Next, we express the perturbed solution $\tilde\alpha(x,y)$ to \eqref{4.2}
in terms of those key quantities, as in \eqref{2.16}. Then, we get the Darboux transformation at the potential level with the
help of $\tilde\alpha(x,x)-\alpha(x,x)$ and obtain the Darboux transformation at the wavefunction level with the help of
$\tilde\alpha(x,y)-\alpha(x,y)$.

As done on the intervals $(x,+\infty)$ in Section~\ref{section2} and $(-\infty,x)$ in Section~\ref{section3}, we
assume that
the unperturbed fundamental integral equation \eqref{4.1} is uniquely solvable for $\alpha(x,y),$ and we express $\alpha(x,y)$ 
explicitly in terms of the corresponding resolvent kernel as
\begin{equation}
\label{4.3}
\alpha(x,y)=-\omega(x,y)-\int_0^x dz\,\omega(x,z)\,r(x;z,y), \qquad 0<y<x,
\end{equation}
which is the analog of \eqref{2.5} on $(x,+\infty)$ and \eqref{3.2} on $(-\infty,x).$

Analogous to Theorems~\ref{theorem2.1} and \ref{theorem3.1}, in the next theorem we relate the relevant resolvent
kernel $r(x;z,y)$ on the interval $(0,x)$ to the unique solution $\alpha(x,y)$ to \eqref{4.1},

\begin{theorem}
\label{theorem4.1}
Assume that \eqref{4.1} is uniquely solvable in the Hilbert space $\mathcal H^2.$ Also suppose that the operator
$\Omega$ with the kernel $\omega(x,y)$ appearing in \eqref{4.1} satisfies \eqref{2.6} and that $\Omega$ is related to
the resolvent $R$ with the kernel $r(x;z,y)$ as in \eqref{2.4}. We then have the following:
\begin{enumerate}
\item[\text{\rm(a)}] The resolvent $R$ and its kernel $r(x;z,y)$ satisfy \eqref{2.19}.

\item[\text{\rm(b)}] The resolvent kernel $r(x;z,y)$ is explicitly expressed in terms of the solution $\alpha(x,y)$ to \eqref{4.1} as
\begin{equation}
\label{4.4}
r(x;z,y)=\begin{cases}\alpha(z,y)+\ds\int_z^x ds\,J\,\alpha(s,z)^\dagger J\,\alpha(s,y), \qquad 0<y<z<x,\\
J\,\alpha(y,z)^\dagger J+\ds\int_y^x ds\,J\,\alpha(s,z)^\dagger J\,\alpha(s,y),\qquad 0<z<y<x,\end{cases}
\end{equation}
where we recall that $J$ is the involution matrix appearing in \eqref{2.6}.
\end{enumerate}
\end{theorem}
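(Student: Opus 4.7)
The plan is to mirror the operator-theoretic argument used in the proof of Theorem~\ref{theorem3.1}, with the integration domain replaced throughout by $(0,x)$. The skeleton transfers verbatim because every algebraic identity in that argument lives at the level of operators and the first equality in \eqref{2.6}, neither of which is sensitive to the particular interval on which the operator acts.

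For part~(a), I would start from the two operator identities $R+\Omega+R\,\Omega=0$ and $R+\Omega+\Omega\,R=0$, both consequences of \eqref{2.3}. Taking the adjoint of the second and sandwiching it between copies of $J$, together with $J^{-1}=J$ from \eqref{2.7} and the assumed $J$-selfadjointness $\Omega=J\,\Omega^\dagger J$, produces
\begin{equation*}
J\,R^\dagger J+\Omega+\Omega\,(J\,R^\dagger J)=0.
\end{equation*}
Unique solvability of \eqref{4.1} in $\mathcal H^2$ lifts to unique solvability of the corresponding operator equation on $(0,x)$, so comparing with the original equation for $R$ forces $R=J\,R^\dagger J$, i.e. $r(x;y,z)=J\,r(x;z,y)^\dagger J$.

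For part~(b), by uniqueness of the resolvent it suffices to verify that the candidate kernel \eqref{4.4} satisfies
\begin{equation*}
r(x;z,y)+\omega(z,y)+\int_0^x ds\, r(x;z,s)\,\omega(s,y)=0, \qquad 0<y,z<x,
\end{equation*}
on both subregions. In the region $0<y<z<x$, I would split the integral as $\int_0^x=\int_0^z+\int_z^x$, substitute the first line of \eqref{4.4} into $r(x;z,y)$ and into the $\int_0^z$ piece (where $s<z$), and substitute the second line into the $\int_z^x$ piece (where $s>z$). As in the proof of Theorem~\ref{theorem3.1}, the resulting expression decomposes into four terms $b_1,b_2,b_3,b_4$: the first is $\alpha(z,y)+\omega(z,y)+\int_0^z ds\,\alpha(z,s)\,\omega(s,y)$ and vanishes by \eqref{4.1}; the remaining three are iterated integrals whose orders can be swapped via Fubini and then fused by $\int_0^z+\int_z^t=\int_0^t$, producing an outer integral of $J\,\alpha(t,z)^\dagger J$ against the bracket $\alpha(t,y)+\omega(t,y)+\int_0^t ds\,\alpha(t,s)\,\omega(s,y)$, which again vanishes by \eqref{4.1} for every $t>y$.

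The main obstacle is the other subregion $0<z<y<x$, where the analogous splitting is substantially more delicate and the naive attempt does not close as cleanly. Here the natural strategy is to invoke part~(a): the second line of \eqref{4.4} is exactly the $J$-conjugate transpose of the first line with $z$ and $y$ interchanged, so the symmetry \eqref{2.19} for both $r(x;\cdot,\cdot)$ and $\omega$ reduces the verification to the already-established case $y<z$. This is precisely the reduction carried out for the interval $(-\infty,x)$ in Theorem~3.1.2 of \cite{U2014}, and the only modification required is replacing the lower limit $-\infty$ by $0$ throughout. For brevity I would omit the detailed manipulation and refer the reader to \cite{U2014}, noting that the argument is not a trivial rewriting but follows the same chain of Fubini interchanges and applications of \eqref{4.1} used on the other two intervals.
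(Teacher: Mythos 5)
Your part~(a) and the first half of your part~(b) coincide with the paper's route: the paper proves Theorem~\ref{theorem4.1} simply by declaring it ``similar to the proof of Theorem~\ref{theorem3.1}'' and deferring the details to \cite{U2014}, and your operator argument for (a) and your four-term decomposition $b_1+b_2+b_3+b_4$ on the region $0<y<z<x$ are exactly the $(0,x)$ transcriptions of that earlier proof. Those portions are correct.

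The gap is in your handling of the region $0<z<y<x$. You claim that the $J$-symmetry of $\omega$ and of the candidate kernel ``reduces the verification to the already-established case $y<z$.'' It does not. Let $R_c$ denote the operator with kernel \eqref{4.4}; the direct computation establishes that the kernel of $R_c+\Omega+R_c\,\Omega$ vanishes for $y<z$. Applying $J(\cdot)^\dagger J$ to the identity you want on $\{z<y\}$, and using $\omega(z,y)=J\,\omega(y,z)^\dagger J$ together with the built-in symmetry $r(x;z,y)=J\,r(x;y,z)^\dagger J$ of \eqref{4.4}, converts it (after relabeling) into
\begin{equation*}
r(x;z,y)+\omega(z,y)+\int_0^x ds\,\omega(z,s)\,r(x;s,y)=0,\qquad 0<y<z<x,
\end{equation*}
which is the case $y<z$ of the \emph{transposed} equation $R_c+\Omega+\Omega\,R_c=0$, not of the equation you have already verified. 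Before you know that $R_c$ is the two-sided resolvent, $R_c\Omega$ and $\Omega R_c$ need not agree, so this is a genuinely different identity: verifying it requires a separate mirror-image computation in which \eqref{4.1} must be invoked in its $J$-adjoint form (with $\omega$ standing to the left of $\alpha$), and the splitting and Fubini steps from the first region do not transfer verbatim. This is precisely why the paper describes the second region as ``much more challenging'' and cites Theorem~4.1.2 of \cite{U2014} rather than invoking symmetry. Deferring that computation to the thesis is legitimate --- the paper does the same --- but as written your argument leaves $(I+R_c)(I+\Omega)=I$ verified only on one triangle and $(I+\Omega)(I+R_c)=I$ only on the other, and these two partial statements together do not force $R_c=(I+\Omega)^{-1}-I$.
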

	
\begin{proof}
The proof is similar to the proof of Theorem~\ref{theorem3.1}. We refer the reader to
Proposition~4.1.1 and Theorem~4.1.2 of \cite{U2014} for (a) and (b), respectively, for the details of the proof.
\end{proof}

As Theorem~\ref{theorem4.1} indicates, the resolvent kernel $r(x;z,y)$ is uniquely determined by the solution
$\alpha(x,y)$ to \eqref{4.1}. Let us remark that, given the pair $\alpha(x,y)$ and $\omega(x,y),$ the integral equation \eqref{4.3}
may still be satisfied if the resolvent kernel $r(x;z,y)$ there is replaced by some other functions. However, such other
functions satisfying \eqref{4.3} cannot all be equal to the uniquely determined resolvent kernel and they do not satisfy
\eqref{4.4}. This important fact is illustrated later in Example~\ref{example5.9}.

Next, we introduce the key quantities $n(x)$ and $q(y)$ on the interval $(0,x),$ which are
the analogs of the quantities presented in \eqref{2.12} and \eqref{2.13}, respectively, on $(x,+\infty)$
and the analogs of those presented in \eqref{3.18} and \eqref{3.19}, respectively, on $(-\infty,x).$ We let
\begin{equation}
\label{4.5}
n(x):=f(x)+\int_0^x dz\, \alpha(x,z)\,f(z),
\end{equation}
\begin{equation}
\label{4.6}
q(y):=g(y)+\int_0^y dz\, g(z)\,J\,\alpha(y,z)^\dagger J,
\end{equation}
where we recall that $f(x)$ and $g(y)$ are the perturbation quantities appearing in \eqref{2.11}, the involution
matrix $J$ is as in \eqref{2.6}, and $\alpha(x,y)$ is the unique solution to \eqref{4.1}.

The following theorem is the analog of Theorem~\ref{theorem3.2} presented on the interval $(-\infty,x).$ It shows
that the integral equation \eqref{4.2} on $(0,x)$ can be transformed into another integral equation with a separable kernel.

\begin{theorem}
\label{theorem4.2}
Assume that \eqref{4.1} is uniquely solvable for $\alpha(x,y)$ in the Hilbert space $\mathcal H^2$ and that
the operator $\Omega$ with the kernel $\omega(x,y)$ satisfies \eqref{2.6}. Let $R$ be the corresponding resolvent
operator as in \eqref{2.4} with its kernel $r(x;z,y).$ Furthermore, assume that $\tilde\Omega-\Omega$
corresponds to the finite-rank perturbation given in \eqref{2.11} and let $f(x)$ and $g(y)$ be the quantities
appearing in \eqref{2.11}. Then, we have the following:
\begin{enumerate}
\item[\text{\rm(a)}] The linear integral equation \eqref{4.2} is transformed into the linear integral equation on the interval $(0,x)$ given by
\begin{equation}
\label{4.7}
\tilde\alpha(I+F\,\tilde G)=\alpha-f\tilde g,
\end{equation}
which is the analog of \eqref{3.17} presented on the interval $(-\infty,x).$ The kernel of the integral equation
\eqref{4.7} is given by $f(y)\,\tilde g(x,z),$ where $\tilde g(x,y)$ is defined as
\begin{equation}
\label{4.8}
\tilde g(x,y):=g(y)+\int_0^x dz\, g(z)\, r(x;z,y),
\end{equation}
which is the analog of \eqref{3.20} presented on the interval $(-\infty,x).$ The kernel $f(y)\,\tilde g(x,z)$ of the integral
equation \eqref{4.7} is separable in $y$ and $z.$ Consequently, \eqref{4.7} is explicitly solvable by the methods of linear
algebra, and the solution $\tilde\alpha(x,y)$ to \eqref{4.2} is
expressed as in \eqref{2.16}, where the relevant quantities $n(x),$ $\tilde g(x,y),$ and
$\Gamma(x)$ are now those defined on the interval $(0,x)$ rather than on $(x,+\infty).$ In particular, $n(x)$ now is
the quantity defined in \eqref{4.5}, $q(y)$ is the quantity defined in \eqref{4.6}, $\tilde g(x,y)$ is the quantity
defined in \eqref{4.8}, and the quantity $\Gamma(x)$ is the analog of the quantity in \eqref{2.15} and is now given by
\begin{equation}
\label{4.9}
\Gamma(x):=I+\int_0^x dz\,\tilde g(x,z)\,f(z).
\end{equation}
\item[\text{\rm(b)}] The solution $\tilde\alpha(x,y)$ to \eqref{4.2} expressed as in \eqref{2.16} can be written also as
\begin{equation}
\label{4.10}
\tilde\alpha(x,y)=\alpha(x,y)-n(x)\left[ I+\int_0^x ds\,\tilde g(x,s)\,f(s) \right]^{-1} \tilde g(x,y), \qquad 0<y<x.
\end{equation}
\item[\text{\rm(c)}] The quantity $\tilde g(x,y)$ defined in \eqref{4.8} can be expressed in terms of the
quantities $\alpha(x,y)$ and $q(y)$ appearing in \eqref{4.1} and \eqref{4.6}, respectively, and we have
\begin{equation}
\label{4.11}
\tilde g(x,y)=q(y)+\int_y^x dz\, q(z)\, \alpha(z,y),
\end{equation}
which is the analog of \eqref{2.14} but expressed on the interval $(0,x).$
\end{enumerate}
\end{theorem}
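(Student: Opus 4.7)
The plan is to mirror, almost line for line, the proof of Theorem~\ref{theorem3.2}, exploiting the fact that the operator identities involved are written without reference to the particular domain of integration. The only place where the specific interval $(0,x)$ enters is in the final unfolding of the integral splittings in part (c), and it is there that I will need to be a bit careful.

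For part (a), I would start from \eqref{4.2} written in operator form as $\tilde\alpha + \tilde\omega + \tilde\alpha\,\tilde\Omega = 0$, substitute the perturbation formula \eqref{2.11} to obtain $\tilde\alpha(I + \Omega + FG) = -\omega - fg$, and then multiply on the right by $(I+R)$. Using $(I+\Omega)(I+R) = I$ and $\alpha = -\omega(I+R)$ from \eqref{2.4}, this collapses to $\tilde\alpha\bigl[I + FG(I+R)\bigr] = \alpha - fg(I+R)$. Defining $\tilde G := G(I+R)$, its kernel is exactly the $\tilde g(x,y)$ in \eqref{4.8}, and \eqref{4.7} follows. The kernel $f(y)\,\tilde g(x,z)$ of $F\tilde G$ is then visibly rank-$N$ in the variables $y,z$.

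For the solvability and formula (b), I would make the ansatz $\tilde\alpha(x,y) = \alpha(x,y) + p(x)\,\tilde g(x,y)$ and substitute into the rewritten equation $(\tilde\alpha - \alpha) + f\tilde g + \tilde\alpha F\tilde G = 0$. This forces $p(I + \tilde g F) = -(f + \alpha F)$, and applying the definitions \eqref{4.5} and \eqref{4.9} gives $p(x) = -n(x)\,\Gamma(x)^{-1}$. Substituting back yields \eqref{4.10}, which is the claim of (b), and is also the interval-$(0,x)$ version of \eqref{2.16}.

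For part (c) — the step I expect to require the most care — I would split the range of integration in \eqref{4.8} as $\int_0^x = \int_0^y + \int_y^x$ and substitute the two branches of \eqref{4.4}: the second line of \eqref{4.4} on $\int_0^y$ (where $z<y<x$) and the first line on $\int_y^x$ (where $y<z<x$). This produces three simple terms, call them $b_5, b_6, b_7$, with $b_5 = g(y) + \int_0^y ds\,g(s)\,J\alpha(y,s)^\dagger J + \int_y^x ds\,g(s)\,\alpha(s,y) = q(y) + \int_y^x ds\,g(s)\,\alpha(s,y)$ by \eqref{4.6}, and $b_6, b_7$ iterated integrals. I would then swap the orders of integration: $b_6$ becomes $\int_y^x dt \int_0^y ds$ and $b_7$ becomes $\int_y^x dt \int_y^t ds$, so that $\int_0^y + \int_y^t = \int_0^t$ combines them into a single iterated integral $\int_y^x dt \int_0^t ds\, g(s)\,J\alpha(t,s)^\dagger J\,\alpha(t,y)$. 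Recognising the inner integral, together with $g(t)$, as $q(t)$ via \eqref{4.6} (after relabeling the dummy variable $t \mapsto z$), and combining with the $\int_y^x ds\,g(s)\,\alpha(s,y)$ piece from $b_5$, assembles everything into \eqref{4.11}. The main bookkeeping obstacle is just verifying that the lower limit $0$ (instead of $-\infty$ in Theorem~\ref{theorem3.2}) propagates correctly through these swaps — which it does, since all dummy variables remain in $(0,x)$.
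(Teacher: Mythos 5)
Your proposal is correct and follows exactly the route the paper intends: the paper's own proof of Theorem~\ref{theorem4.2} simply states that it is similar to the proof of Theorem~\ref{theorem3.2} (deferring details to the thesis \cite{U2014}), and your argument is precisely that adaptation, with the operator manipulations for (a) and (b) unchanged and the interval splitting $\int_0^x=\int_0^y+\int_y^x$ together with the two branches of \eqref{4.4} handled correctly in (c).
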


\begin{proof}
The proof is similar to the proof of Theorem~\ref{theorem3.2}. We refer the reader to Theorem~4.2.1, Theorem~4.2.2,
and Proposition~4.2.2 of \cite{U2014} for the details.
\end{proof}

The next theorem describes how to obtain the Darboux transformation at the potential and wavefunction levels 
in our generalized method on $(0,x).$

\begin{theorem}
\label{theorem4.3}
Assume that \eqref{4.1} is uniquely solvable for $\alpha(x,y)$ in the Hilbert space $\mathcal H^2$ and that the operator
$\Omega$ satisfies \eqref{2.6}. Also assume that $\tilde\Omega-\Omega$ corresponds to the finite-rank perturbation
given in \eqref{2.11}. Let $\tilde\alpha(x,y)$ be the solution to \eqref{4.2} and let $n(x),$ $q(y),$ $\Gamma(x),$ and $\tilde g(x,y)$
be the quantities given in \eqref{4.5}, \eqref{4.6}, \eqref{4.9}, and \eqref{4.11}, respectively. Then, the Darboux transformation at the
wavefunction level is obtained from
the difference $\tilde\alpha(x,y)-\alpha(x,y).$ We can explicitly express that difference
in terms of $n(x)$, $\Gamma(x),$ and $\tilde g(x,y)$ as
\begin{equation}
\label{4.12}
\tilde\alpha(x,y)-\alpha(x,y)=-n(x)\,\Gamma(x)^{-1}\,\tilde g(x,y), \qquad 0<y<x,
\end{equation}
which is the analog of \eqref{2.16} presented on the interval $(x,+\infty)$ and the analog of \eqref{3.41} presented
on $(-\infty,x).$ Therefore, $\tilde\alpha(x,y)$ on the interval $(0,x)$ is explicitly constructed from the unperturbed quantity
$\alpha(x,y)$ and the perturbation quantities $f(x)$ and $g(y)$ appearing in \eqref{2.11}. Furthermore, the Darboux transformation
at the potential level is obtained from $\tilde\alpha(x,x)-\alpha(x,x),$ which can be written explicitly in terms of
$n(x),$ $q(x),$ and $\Gamma(x)$ as
\begin{equation}
\label{4.13}
\tilde\alpha(x,x)-\alpha(x,x)=-n(x)\,\Gamma(x)^{-1}\,q(x).
\end{equation}

\end{theorem}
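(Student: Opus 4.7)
The plan is to derive both \eqref{4.12} and \eqref{4.13} by direct appeal to Theorem~\ref{theorem4.2}, which has already done the analytic heavy lifting. Indeed, Theorem~\ref{theorem4.2} transforms the perturbed integral equation \eqref{4.2} into the separable-kernel equation \eqref{4.7} and produces the explicit closed-form solution \eqref{4.10}. In the argument that follows, the role of \eqref{4.10} is completely analogous to that played by \eqref{3.22} in the proof of Theorem~\ref{theorem3.3}.

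For \eqref{4.12}, I would simply rewrite \eqref{4.10} by recognizing that the bracketed expression $I+\int_0^x ds\,\tilde g(x,s)\,f(s)$ coincides with $\Gamma(x)$ by the definition \eqref{4.9}. Transposing $\alpha(x,y)$ to the left-hand side yields the claimed identity for $\tilde\alpha(x,y)-\alpha(x,y)$ on $0<y<x$.

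For \eqref{4.13}, I would take the limit $y\to x^-$ in \eqref{4.12}. The only $y$-dependent factor on the right-hand side is $\tilde g(x,y)$. Using the alternative representation \eqref{4.11} established in part (c) of Theorem~\ref{theorem4.2}, namely
\begin{equation*}
\tilde g(x,y)=q(y)+\int_y^x dz\, q(z)\,\alpha(z,y),
\end{equation*}
one observes immediately that $\tilde g(x,x)=q(x)$, because the integral collapses when the lower and upper limits coincide. Substituting this into \eqref{4.12} produces \eqref{4.13}.

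The proof presents essentially no obstacle, and the only subtle point worth noting is the validity of the pointwise limit $y\to x^-$, which relies on the continuity of $q$ and of $\tilde g(x,\cdot)$ up to the diagonal in the $\mathcal H^2$ setting in which everything has been formulated. The real work has already been carried out upstream: the explicit inversion of the perturbed integral operator via the separable-kernel structure in Theorem~\ref{theorem4.2}, and the alternative representation \eqref{4.11} which converts the resolvent-based formula \eqref{4.8} for $\tilde g(x,y)$ into a form in which the $y=x$ diagonal value is transparent.
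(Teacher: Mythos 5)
Your proposal is correct and follows essentially the same route as the paper: the authors likewise obtain \eqref{4.12} directly from \eqref{4.9} and \eqref{4.10}, and then derive \eqref{4.13} by noting from \eqref{4.11} that $\tilde g(x,x)=q(x)$ and evaluating \eqref{4.12} at $y=x^-$. Your added remark on continuity up to the diagonal is a reasonable extra precaution but does not change the argument.
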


\begin{proof}
We obtain \eqref{4.12} directly from \eqref{4.9} and \eqref{4.10}. From \eqref{4.11} we see that
$\tilde g(x,x)=q(x),$ and hence the evaluation of \eqref{4.12} at $y=x^-$ yields \eqref{4.12}. 
\end{proof}

Having presented the details of our generalized approach on the interval $(0,x)$
for the Darboux transformation, we now briefly illustrate the difference between the standard method and our generalized method in 
two specific cases, one of which is the Schr\"odinger operator on the half line with the Dirichlet boundary condition and
the other is the corresponding operator with a non-Dirichlet boundary condition.

Let us first discuss the Dirichlet case. 
We assume that a bound state is added to the discrete spectrum at
$k=i\kappa$ with the norming constant $C$ for the scalar Schr\"odinger equation on the half line, which is given by
\begin{equation}
\label{4.14}
-\ds\frac{d^2\psi(k,x)}{dx^2}+u(x)\,\psi(k,x)=k^2\,\psi(k,x), \qquad x>0,
\end{equation}
with the Dirichlet boundary condition
\begin{equation}
\label{4.15}
\psi(0)=0.
\end{equation}
In this case, the relevant wavefunction is the regular solution $\varphi(k,x)$ to \eqref{4.14} satisfying the initial conditions
\begin{equation}
\label{4.16}
\varphi(k,0)=0, \quad \varphi'(k,0)=1.
\end{equation}
Let $\tilde u(x)$ and $\tilde\varphi(k,x)$ be the corresponding perturbed potential and the perturbed regular solution
for the Schr\"odinger equation
\begin{equation}
\label{4.17}
-\ds\frac{d^2\tilde\psi(k,x)}{dx^2}+\tilde u(x)\,\tilde\psi(k,x)=k^2\,\tilde\psi(k,x), \qquad x>0,
\end{equation}
with the Dirichlet boundary condition specified in \eqref{4.15}.
The perturbed regular solution $\tilde\varphi(k,x)$ to \eqref{4.17} satisfies the initial conditions
\begin{equation*}
\tilde\varphi(k,0)=0, \quad \tilde\varphi'(k,0)=1.
\end{equation*}
The norming constant $C$ is related to $\tilde\varphi(k,x)$ through the normalization
\begin{equation}
\label{4.18}
C=\left[\int_0^\infty dx\,\tilde\varphi(i\kappa,x)^2\right]^{-1/2}.
\end{equation}
In this case, the standard approach \cite{ASU2015,AW2006,CS1989,GL1955,L1986,M1987,NJ1955}
for the Darboux transformation is based on using the formulas
\begin{equation}
\label{4.19}
\tilde u(x)=u(x)+2\,\ds\frac{dA(x,x)}{dx},
\end{equation}
\begin{equation}
\label{4.20}
\tilde\varphi(k,x)=\varphi(k,x)+\int_0^x dy \, A(x,y) \, \varphi(k,y),
\end{equation}
where $A(x,y)$ satisfies the Gel'fand--Levitan integral equation
\begin{equation}
\label{4.21}
A(x,y) + M(x,y)+ \int_0^x dz \, A(x,z) \, M(z,y)=0, \qquad 0<y<x,
\end{equation}
with $M(x,y)$ defined as
\begin{equation}
\label{4.22}
M(x,y):=C^2\,\varphi(i\kappa,x)\,\varphi(i\kappa,y).
\end{equation}
Since $M(x,y)$ is separable in $x$ and $y,$ the Gel'fand--Levitan integral equation \eqref{4.21} can be
solved explicitly by the methods of linear algebra, and using the resulting solution $A(x,y)$
in \eqref{4.19} and \eqref{4.20} we obtain the Darboux transformation formulas at the potential and wavefunction levels,
respectively, as
\begin{equation}
\label{4.23}
\tilde u(x)-u(x)=-2\ds\frac{d}{dx}\left[\ds\frac{C^2\,\varphi(i\kappa,x)^2}{1+C^2  \int_0^x dz\,
\varphi(i\kappa,z)^2}\right],
\end{equation}
\begin{equation}
\label{4.24}
\tilde\varphi(k,x)-\varphi(k,x)=-\ds\frac{C^2\,\varphi(i\kappa,x)  \int_0^x dy\, \varphi(k,y)\,
\varphi(i\kappa,y)}{1+C^2  \int_0^x dz\, \varphi(i\kappa,z)^2}.
\end{equation}

In this case, our generalized method yields the same formulas listed in \eqref{4.23} and \eqref{4.24}
even though the starting point of our generalized method is different and is as follows.
In this case, on the interval $(0,x)$ the perturbation given in the second equality of \eqref{2.11} corresponds to 
\begin{equation}
\label{4.25}
\tilde\omega(x,y)-\omega(x,y)
=\frac{C^2}{\kappa^2}\sinh(\kappa x)\sinh(\kappa y).
\end{equation}
Comparing \eqref{4.25} with the second equality of \eqref{2.11}, we see that the perturbation quantities $f(x)$ and $g(y)$ 
can be chosen as
\begin{equation*}
f(x)=\frac{C}{\kappa}\sinh(\kappa x),\quad g(y)=\frac{C}{\kappa}\sinh(\kappa y).
\end{equation*}
The corresponding intermediate quantities $n(x),$ $q(y),$ $\Gamma(x),$ and $\tilde g(x,y)$ appearing in
\eqref{4.5}, \eqref{4.6}, \eqref{4.9}, and \eqref{4.11}, respectively, are evaluated as
\begin{equation}
\label{4.26}
n(x)=C\,\varphi(i\kappa,x),\quad q(y)=C\,\varphi(i\kappa,y),\quad 
\Gamma(x)=1+C^2 \ds\int_0^x dz\,\varphi(i\kappa,z)^2,
\end{equation}
\begin{equation}
\label{4.27}
\tilde g(x,y)=C\,\varphi(i\kappa,y)+C\ds\int_0^x  dz\, \varphi(i\kappa,z)\,\alpha(z,y),
\end{equation}
and in this case we have
\begin{equation}
\label{4.28}
\alpha(x,y)=\ds\frac{2}{\pi}\ds\int_0^\infty dk\left[\varphi(k,x)-\ds\frac{\sin (kx)}{k}
\right]k\,\sin(ky),
\end{equation}
\begin{equation}
\label{4.29}
\tilde\alpha(x,y)=\ds\frac{2}{\pi}\ds\int_0^\infty dk\left[\tilde\varphi(k,x)-\ds\frac{\sin (kx)}{k}
\right]k\,\sin(ky),
\end{equation}
\begin{equation}
\label{4.30}
\varphi(k,x)=\ds\frac{\sin (kx)}{k}+\ds\int_0^x dy\, \alpha(x,y)\,
\ds\frac{\sin (ky)}{k},
\end{equation}
\begin{equation}
\label{4.31}
\tilde\varphi(k,x)=\ds\frac{\sin (kx)}{k}+\ds\int_0^x dy\, \tilde\alpha(x,y)\,
\ds\frac{\sin (ky)}{k}.
\end{equation}
Then, the counterpart of \eqref{4.23} is obtained with the help of \eqref{4.13}, and it is given by
\begin{equation}
\label{4.32}
\tilde u(x)-u(x)=-2\ds\,\frac{d^2\ln \left(\Gamma(x)\right)}{dx^2},
\end{equation}
where $\Gamma(x)$ is the quantity appearing in the last equality of \eqref{4.26}.
Comparing \eqref{4.23} and \eqref{4.32}, we see that they agree with each other.
The counterpart of \eqref{4.24} is obtained with the help of \eqref{4.12}, and it
is the same as \eqref{4.24} itself. We note that the bound state
at $k=i\kappa$ can be added anywhere in the discrete spectrum because the quantity
$\Gamma(x)$ given in \eqref{4.26} is already positive.
Then, both the standard method and our generalized method yield the same 
Darboux transformation formulas even though their starting points are different.
In our generalized method, we use a fundamental integral equation for
each of the unperturbed and perturbed problems rather than
a fundamental integral equation used only for the perturbed problem in the
standard method based on the Gel'fand--Levitan theory.

Let us discuss the non-Dirichlet case and make a comparison between the standard method and our 
generalized method. A non-Dirichlet boundary condition associated with \eqref{4.14} is specified as
\begin{equation}
\label{4.33}
\psi'(0)+(\cot\theta)\,\psi(0)=0, \qquad \theta\in(0,\pi),
\end{equation}
for some fixed value of $\theta.$ In this case, the relevant wavefunction is the regular solution to \eqref{4.14}
satisfying the initial conditions
\begin{equation}
\label{4.34}
\varphi(k,0)=1, \quad \varphi'(k,0)=-\cot\theta.
\end{equation}
Again, we add a bound state at $k=i\kappa$ with the norming constant $C$ to obtain  \eqref{4.17} from \eqref{4.14}.
Let $\tilde u(x)$ be the corresponding perturbed potential appearing in \eqref{4.17},
and let $\tilde\varphi(k,x)$ be the perturbed regular solution
satisfying the initial conditions
\begin{equation}
\label{4.35}
\tilde\varphi(k,0)=1, \quad \tilde\varphi'(k,0)=-\cot\tilde\theta,
\end{equation}
for some $\tilde\theta\in(0,\pi).$
The norming constant $C$ is related to $\tilde\varphi(k,x)$ as in \eqref{4.18}, even though the regular solutions
in the Dirichlet and non-Dirichlet cases are different. Let us also remark that
the value of the bound-state parameter $\kappa$ in the non-Dirichlet case is different from
that in the Dirichlet case. In the standard approach, the Gel'fand--Levitan method summarized in \eqref{4.19}--\eqref{4.22} 
again yields the Darboux transformation \eqref{4.23} and \eqref{4.24} at the potential and wavefunction levels, respectively.
In the non-Dirichlet case, our generalized method yields the same formulas as \eqref{4.23} and \eqref{4.24} 
even though the starting point of our generalized method is different than the starting point in the standard method.
In this case, in our generalized method on the interval $(0,x),$ the perturbation given in the second equality of \eqref{2.11} corresponds to 
\begin{equation*}
\tilde\omega(x,y)-\omega(x,y)
=C^2\cosh(\kappa x)\cosh(\kappa y),
\end{equation*}
and hence we can choose the perturbation quantities $f(x)$ and $g(y)$ as
\begin{equation*}
f(x)=C\cosh(\kappa x),\quad g(y)=C\cosh(\kappa y).
\end{equation*}
The corresponding intermediate quantities $n(x),$ $q(y),$ $\Gamma(x),$ and $\tilde g(x,y)$ appearing in
\eqref{4.5}, \eqref{4.6}, \eqref{4.9}, and \eqref{4.11}, respectively, have the same forms given in 
\eqref{4.26} and \eqref{4.27}. On the other hand, in the non-Dirichlet case, instead of 
\eqref{4.28}--\eqref{4.31} we have
\begin{equation*}
\alpha(x,y)=\ds\frac{2}{\pi}\ds\int_0^\infty dk\left[\varphi(k,x)-\cos (kx)
\right]\cos(ky),
\quad
\tilde\alpha(x,y)=\ds\frac{2}{\pi}\ds\int_0^\infty dk\left[\tilde\varphi(k,x)-\cos (kx)
\right]\cos(ky),
\end{equation*}
\begin{equation}
\label{4.36}
\varphi(k,x)=\cos(kx)+\ds\int_0^x dy\, \alpha(x,y)\,
\cos(ky),
\quad
\tilde\varphi(k,x)=\cos(kx)+\ds\int_0^x dy\, \tilde\alpha(x,y)\,
\cos(ky).
\end{equation}
As already mentioned, our generalized method in the non-Dirichlet case yields the same formulas as \eqref{4.23} and \eqref{4.24}
obtained by the standard method 
for the Darboux transformation.
Our generalized method uses a fundamental integral equation for
each of the unperturbed and perturbed problems whereas the
standard method uses only one 
fundamental integral equation,
i.e. the Gel'fand--Levitan equation.

\section{Examples}
\label{section5}

In this section we illustrate the theory presented in the earlier sections
with some explicit examples. Various aspects of our generalized method for the Darboux
transformation are demonstrated on the interval $(-\infty,x)$ in the first five examples and
on the interval $(0,x)$ in the remaining four examples.

In the first example, we illustrate the construction of the resolvent kernel $r(x;z,y)$
corresponding to a particular operator $\Omega$ on the interval $(-\infty,x).$

\begin{example}\label{example5.1}
\normalfont
Let us illustrate the determination on $(-\infty,x)$ of the resolvent kernel $r(x;z,y)$ for the
operator
$\Omega$ whose kernel $\omega(z,y)$ is given by
\begin{equation}
\label{5.1}
\omega(z,y)=c_1^{2}\,e^{\kappa_1(z+y)},
\end{equation}
where $c_1$ and $\kappa_1$ are some positive parameters. To construct $r(x;z,y)$ from \eqref{5.1},
we can use \eqref{3.3} stated in Theorem~\ref{theorem3.1}(b). However, since $r(x;z,y)$ is unique
and \eqref{3.3} is equivalent to the two equations given in \eqref{3.8} and \eqref{3.9}, we can equivalently
construct $r(x;z,y)$ from \eqref{3.8} and \eqref{3.9}, which are valid for $y<z<x$ and
$z<y<x,$ respectively. Because of the separability of $\omega(z,y)$ in $z$ and $y,$ we are able
to solve the integral equations in \eqref{3.8} and \eqref{3.9} explicitly with the methods
of linear algebra. Using \eqref{5.1} as input to \eqref{3.8}, we obtain
\begin{equation}
\label{5.2}
  r(x;z,y) + c^2_1\,e^{\kappa_1(z+y)}+\int_{-\infty}^x ds \, r(x;z,s) \, c^2_1\,e^{\kappa_1(s+y)}=0, \qquad y<z<x.
\end{equation}
The explicit solution to \eqref{5.2} in the interval $y<z<x$ is given by
\begin{equation}
\label{5.3}
  r(x;z,y)=-\ds{\frac{c^2_1\,e^{\kappa_1(z+y)}}{1+\left(c^2_1/(2\kappa_1)\right)e^{2\kappa_1 x}}}.
\end{equation}
In this particular example, as seen from \eqref{5.1} we have $\omega(z,y)=\omega(y,z).$ Thus,
the integral equation in \eqref{3.9} for $z<y<x$ is the same as the integral
equation in \eqref{3.8} for $y<z<x.$ 
Therefore, $r(x;z,y)$ presented in \eqref{5.3} is the resolvent
kernel for the operator corresponding to \eqref{5.1} for $y<z<x$ as well as for $z<y<x.$
\end{example}

In the next example, we illustrate the fact that for a given pair $\alpha(x,y)$ and $\omega(x,y),$
\eqref{3.2} may still be satisfied if some other functions are substituted for $r(x;z,y)$  besides
the unique resolvent kernel associated with $\alpha(x,y)$ and $\omega(x,y).$

\begin{example}\label{example5.2}
\normalfont
To illustrate that \eqref{3.2} remains satisfied if $r(x;z,y)$ is replaced by some other function
which is not the resolvent kernel, we use the particular operator $\Omega$ studied in
Example~\ref{example5.1}. From $\omega(z,y)$ in \eqref{5.1}, we get the
corresponding kernel
$\omega(x,y)$ as
\begin{equation}
\label{5.4}
\omega(x,y)=c_1^{2}\,e^{\kappa_1(x+y)},
\end{equation}
where $c_1$ and $\kappa_1$ are some positive parameters. In order to determine the corresponding
$\alpha(x,y),$ we use \eqref{5.4} in \eqref{3.1} and get
\begin{equation}
\label{5.5}
  \alpha(x,y) + c^2_1\,e^{\kappa_1(x+y)}+\int_{-\infty}^x ds \, \alpha(x,s) \, c^2_1\,e^{\kappa_1(s+y)}=0.
\end{equation}
As a result of the separability of the integral kernel in \eqref{5.5}, its solution is obtained explicitly as
\begin{equation}
\label{5.6}
  \alpha(x,y)=-\ds{\frac{c^2_1\,e^{\kappa_1(x+y)}}{1+\left(c^2_1/(2\kappa_1)\right)e^{2\kappa_1 x}}},\qquad y<x.
\end{equation}
Let us recall from Example~\ref{example5.1} that the unique resolvent kernel corresponding to the
pair $\omega(x,y)$ and $\alpha(x,y)$ appearing in \eqref{5.4} and \eqref{5.6}, respectively, is the
quantity given in \eqref{5.3} both when $y<z<x$ and $z<y<x.$ We now demonstrate that \eqref{3.2}
still holds if we use \eqref{5.4}, \eqref{5.6}, and another substitute for $r(x;z,y)$ such as
\begin{equation}
\label{5.7}
r(x;z,y)=\ds{\frac{c^2_1\,e^{\kappa_1(z+y)}}{\left[1+\left(c^2_1/(2\kappa_1)\right)e^{2\kappa_1 z}\right]^2}},
\end{equation}
which is different from the integral kernel $r(x;z,y)$ given in \eqref{5.3}. Using \eqref{5.4} and \eqref{5.7}
as input, we evaluate the right-hand side of \eqref{3.2} as
\begin{equation*}
-c_1^{2}\,e^{\kappa_1(x+y)}+c_1^{4}\,e^{\kappa_1(x+y)}\,\int_{-\infty}^x dz\,
\ds{\frac{e^{2\kappa_1z}}{\left[1+\left(c^2_1/(2\kappa_1)\right)e^{2\kappa_1 z}\right]^2}},
\end{equation*}
which can be explicitly expressed as
\begin{equation*}
-c_1^{2}\,e^{\kappa_1(x+y)}+c_1^{4}\,e^{\kappa_1(x+y)}
\left[\ds{\frac{1}{c_1^2}}-\ds\frac{1}{c_1^2}\,\ds{\frac{1}
{1+\left(c^2_1/(2\kappa_1)\right)e^{2\kappa_1 x}}}\right],
\end{equation*}
which in turn is equal to the right-hand side of \eqref{5.6}. Thus, we have demonstrated that \eqref{3.2}
is not only satisfied by the input triplet \eqref{5.3}, \eqref{5.4}, \eqref{5.6} but also by the input triplet \eqref{5.4}, \eqref{5.6}, \eqref{5.7}.
\end{example}

Let us recall that the correspondence between $\alpha(x,y)$ and $\omega(x,y)$ is unique and that $\alpha(x,y)$
is determined by $\omega(x,y)$ by solving the integral equation \eqref{3.1}. In Example~\ref{example5.1}
we have illustrated the construction of the resolvent kernel $r(x;z,y)$ from \eqref{3.8} and \eqref{3.9} using
$\omega(x,y)$ as input. In the next example, we illustrate the construction of $r(x;z,y)$ from \eqref{3.3} using $\alpha(x,y)$ as input.

\begin{example}
\label{example5.3}
\normalfont
In this example, we demonstrate the construction of the resolvent kernel $r(x;z,y)$ of \eqref{5.3},
by using $\alpha(x,y)$ of \eqref{5.6} as input to \eqref{3.3}. Since $\alpha(x,y)$ in \eqref{5.6}
is real valued and a scalar, we observe that $J\,\alpha(x,y)^\dagger J$ appearing in \eqref{3.3} is the same as
$\alpha(x,y)$ itself no matter whether we use $J=1$ or $J=-1.$ Thus, in this particular case,
\eqref{3.3} is equivalent to
\begin{equation}
\label{5.8}
r(x;z,y)=\begin{cases}\alpha(z,y)+\ds\int_z^x ds\,\alpha(s,z)\,\alpha(s,y), \qquad y<z<x,\\
\noalign{\medskip}
\alpha(y,z)+\ds\int_y^x ds\,\alpha(s,z)\,\alpha(s,y),\qquad z<y<x.\end{cases}
\end{equation}
Using \eqref{5.6} in the first line of \eqref{5.8}, for $y<z<x$ we obtain
\begin{equation}
\label{5.9}
r(x;z,y)=-\ds{\frac{c^2_1\,e^{\kappa_1(z+y)}}{1+\left(c^2_1/(2\kappa_1)\right)e^{2\kappa_1 z}}}
+\ds\int_z^x ds\,\ds{\frac{c^2_1\,e^{\kappa_1(s+z)}\,c^2_1\,e^{\kappa_1(s+y)}}{\left[1+\left(c^2_1/(2\kappa_1)
\right)e^{2\kappa_1 s}\right]^2}}.
\end{equation}
The integral in \eqref{5.9} can be explicitly evaluated, and after some simplifications we get
\begin{equation}
\label{5.10}
 r(x;z,y)=-\ds{\frac{c^2_1\,e^{\kappa_1(z+y)}}{1+\left(c^2_1/(2\kappa_1)\right)e^{2\kappa_1x}}},\qquad y<z<x.
\end{equation}
We observe that the right-hand sides of \eqref{5.3} and \eqref{5.10} coincide, and hence the resolvent kernel
constructed from $\alpha(x,y)$ for $y<z<x$ agrees with the resolvent kernel constructed from $\omega(x,y).$
To show that there is also the agreement when $z<y<x,$ we use \eqref{5.6} in the second line of \eqref{5.8} and obtain
\begin{equation}
\label{5.11}
r(x;z,y)=-\ds{\frac{c^2_1\,e^{\kappa_1(y+z)}}{1+\left(c^2_1/(2\kappa_1)\right)e^{2\kappa_1 y}}}
+\ds\int_y^x ds\,\ds{\frac{c^2_1\,e^{\kappa_1(s+z)}\,c^2_1\,e^{\kappa_1(s+y)}}{\left[1+\left(c^2_1/(2\kappa_1)
\right)e^{2\kappa_1s}\right]^2}}.
\end{equation}
The integral on the right-hand side of \eqref{5.11} can be explicitly evaluated, and from \eqref{5.11} we get
\begin{equation*}
 r(x;z,y)=-\ds{\frac{c^2_1\,e^{\kappa_1(z+y)}}{1+\left(c^2_1/(2\kappa_1)\right)e^{2\kappa_1x}}},\qquad z<y<x,
\end{equation*}
which also agrees with \eqref{5.3} when $z<y<x.$
Thus, we have demonstrated that the value of the resolvent kernel $r(x;z,y)$ constructed from $\alpha(x,y)$
by using \eqref{3.3} agrees with the value $r(x;z,y)$ constructed from $\omega(x,y)$ by using \eqref{3.8} and \eqref{3.9}.
\end{example}

Let us recall that we assume that the quantity $\omega(x,y)$ satisfies the $J$-symmetry stated in the
second equality in \eqref{2.6}. Consequently, the corresponding resolvent kernel satisfies the same $J$-symmetry
stated in the second equality in \eqref{2.19}, as indicated by Theorems~\ref{theorem2.1}, \ref{theorem3.1}, and \ref{theorem4.1}
in the intervals $(x,+\infty),$ $(-\infty,x),$ and $(0,x),$ respectively. However, the quantity $\alpha(x,y)$ corresponding
to $\omega(x,y)$ and $r(x;z,y)$ does not satisfy the $J$-symmetry. Thus, in general we have
\begin{equation}
\label{5.12}
 \alpha(x,y)\ne J\,\alpha(y,x)^\dagger J.
\end{equation}
When the support property of $\alpha(x,y)$ is taken into account, it is clear that we cannot have 
an equality in \eqref{5.12} because one side of that equality would be zero.
On the other hand, in practice one may first obtain the expression for $\alpha(x,y)$ without its support property
and may impose the support property afterwards. In such a case, the fact expressed in \eqref{5.12} is relevant,
and one must be careful in using
\eqref{2.20}, \eqref{3.3}, and \eqref{4.4} in the evaluation of the corresponding resolvent kernel
$r(x;z,y)$ on the intervals $(x,+\infty),$ $(-\infty,x),$ and $(0,x),$ respectively.

In the next example, we illustrate the fact that, when its support property is not taken into account, the solution $\alpha(x,y)$ to the fundamental integral equation
\eqref{3.1} does not satisfy the $J$-symmetry.

\begin{example}
\label{example5.4}
\normalfont
In this example we illustrate \eqref{5.12} in the scalar case and when $\alpha(x,y)$ is real valued.
In that case, \eqref{5.12} is equivalent to having
\begin{equation}
\label{5.13}
\alpha(x,y)\neq \alpha(y,x).
\end{equation}
In fact, the quantity $\alpha(x,y)$ given in \eqref{5.6} of Example~\ref{example5.2} readily conforms to \eqref{5.13}.
Let us elaborate on this point. In Example~\ref{example5.3}, contrary to \eqref{5.13}, if we had $\alpha(x,y)=
\alpha(y,x),$ from \eqref{3.3} we would get
\begin{equation}
\label{5.14}
r(x;z,y)=\begin{cases}\alpha(z,y)+\ds\int_z^x ds\,\alpha(z,s)\,\alpha(s,y), \qquad y<z<x,\\
\noalign{\medskip}
\alpha(z,y)+\ds\int_y^x ds\, \alpha(z,s)\,\alpha(s,y),\qquad z<y<x,\end{cases}
\end{equation}
instead of \eqref{5.8}. Using the value of $\alpha(x,y)$ given in \eqref{5.6} as input to \eqref{5.14}, we would obtain
for $y<z<x$ the expression
\begin{equation}
\label{5.15}
r(x;z,y)=-\ds{\frac{c^2_1\,e^{\kappa_1(z+y)}}{1+\left(c^2_1/(2\kappa_1)\right)e^{2\kappa_1z}}}
+\ds{\frac{c^4_1\,e^{\kappa_1(z+y)}}{1+\left(c^2_1/(2\kappa_1)\right)e^{2\kappa_1z}}}\int_z^x ds\,\ds{\frac{e^{2\kappa_1s}}
{1+\left(c^2_1/(2\kappa_1)\right)e^{2\kappa_1s}}},
\end{equation}
and for $z<y<x$ we would have
\begin{equation}
\label{5.16}
r(x;z,y)=
-\ds{\frac{c^2_1\,e^{\kappa_1(z+y)}}{1+\left(c^2_1/(2\kappa_1)\right)e^{2\kappa_1z}}}
+\ds{\frac{c^4_1\,e^{\kappa_1(z+y)}}{1+\left(c^2_1/(2\kappa_1)\right)e^{2\kappa_1 z}}}\int_y^x ds\,\ds{\frac{e^{2\kappa_1s}}
{1+\left(c^2_1/(2\kappa_1)\right)e^{2\kappa_1s}}}.
\end{equation}
The integrals on the right-hand sides of \eqref{5.15} and \eqref{5.16} can be explicitly evaluated, and hence \eqref{5.14}
would yield
\begin{equation}
\label{5.17}
r(x;z,y)=\begin{cases}
\ds{\frac{c^2_1\,e^{\kappa_1(z+y)}\big[-1+\ln(c^2_1\,e^{2\kappa_1x}+2\kappa_1)-\ln(c^2_1
\,e^{2\kappa_1z}+2\kappa_1)\big]}{1+\left(c^2_1/(2\kappa_1)\right)e^{2\kappa_1z}}}, \qquad y<z<x,\\
\noalign{\medskip}
\ds{\frac{c^2_1\,e^{\kappa_1(z+y)}\big[-1+\ln(c^2_1\,e^{2\kappa_1x}+2\kappa_1)-\ln(c^2_1
\,e^{2\kappa_1y}+2\kappa_1)\big]}{1+\left(c^2_1/(2\kappa_1)\right)e^{2\kappa_1z}}},\qquad z<y<x,
\end{cases}
\end{equation}
which contradicts the correct expression for $r(x;z,y)$ given in \eqref{5.3}. Since the resolvent kernel 
for any given $\alpha(x,y)$ must be unique, we conclude that the right-hand side of \eqref{5.17}
cannot be the correct expression for the resolvent kernel $r(x;z,y)$ corresponding
to 
$\alpha(x,y)$ in \eqref{5.6}.
\end{example}

In the next example, we use our generalized method on the interval $(-\infty,x)$ in order to illustrate the Darboux
transformation for the full-line Schr\"odinger equation \eqref{2.22}.

\begin{example}
\label{example5.5}
\normalfont
Corresponding to 
the Schr\"odinger equation \eqref{2.22}, let us assume that the unperturbed integral kernel
$\omega(x,y)$ appearing in \eqref{3.1} is equal to the quantity in \eqref{5.4}.
We already know from Example~\ref{example5.2} that the corresponding solution
$\alpha(x,y)$ to \eqref{3.1} is given by the
expression in \eqref{5.6}. Using \eqref{5.6} in the first equality of \eqref{3.46} we 
construct the corresponding wavefunction $f_r(k,x)$ as
\begin{equation}
\label{5.18}
f_r(k,x)=e^{-ikx}\left[1-\ds{\frac{ic^2_1\,e^{2\kappa_1x}}{\left(k+i\kappa_1\right)
\left[1+\left(c^2_1/(2\kappa_1)\right)e^{2\kappa_1x}\right]}}\right].
\end{equation}
Using \eqref{5.18} in \eqref{2.22}, we evaluate the unperturbed potential $u(x)$ as
\begin{equation}
\label{5.19}
u(x)=-\ds{\frac{4c^2_1\,\kappa_1\,e^{2\kappa_1x}}
{\big[1+\left(c^2_1/(2\kappa_1)\right)e^{2\kappa_1x}\big]^2}}.
\end{equation}
In this case, the unperturbed potential $u(x)$
has one bound state at $k=i\kappa_1$ with the norming constant $c_1.$
Let us now introduce the perturbation by adding an additional bound state at $k=i\kappa_2$ with the norming constant $c_2.$
The perturbation in \eqref{2.11} is then described by
\begin{equation}
\label{5.20}
\tilde\omega(x,y)-\omega(x,y)=c^2_2\,e^{\kappa_2(x+y)}.
\end{equation}
Comparing \eqref{2.11} and \eqref{5.20}, we see that $f(x)$ and $g(y)$ appearing in \eqref{2.11} can be chosen as
\begin{equation}
\label{5.21}
f(x)=c_2\,e^{\kappa_2x}, \quad  g(y)=c_2\,e^{\kappa_2y}.
\end{equation}
Using $\alpha(x,y)$ from \eqref{5.6} and $f(x)$ and $g(y)$ from \eqref{5.21}, we evaluate the quantities $n(x)$
and $q(y)$ defined in \eqref{3.18} and \eqref{3.19}, respectively, as
\begin{equation}
\label{5.22}
n(x)=c_2\,e^{\kappa_2x}-\ds\frac{c^2_1\,c_2\,e^{(2\kappa_1+\kappa_2)x}}{(\kappa_1+\kappa_2)
\big[1+\left(c^2_1/(2\kappa_1)\right) e^{2\kappa_1x}\big]},
\end{equation}
\begin{equation}
\label{5.23}
q(y)=c_2\,e^{\kappa_2y}-\ds\frac{c^2_1\,c_2\,e^{(2\kappa_1+\kappa_2)y}}{(\kappa_1+\kappa_2)
\big[1+\left(c^2_1/(2\kappa_1)\right)e^{2\kappa_1y}\big]}.
\end{equation}
Using \eqref{5.6} and \eqref{5.23}, we construct the quantity $\tilde g(x,y)$ given in \eqref{3.23} as
\begin{equation}
\label{5.24}
\tilde g(x,y)=c_2\,e^{\kappa_2 y} -\ds\frac{c_1^2\,c_2\,e^{\kappa_1(x+y)+\kappa_2 x }}
{(\kappa_1+\kappa_2)\big[1+\left(c^2_1/(2\kappa_1)\right) e^{2\kappa_1x}\big]}.
\end{equation}
Next, using \eqref{5.21} and \eqref{5.24} we evaluate the quantity $\Gamma(x)$ defined in \eqref{3.21} as
\begin{equation}
\label{5.25}
\Gamma(x)=1+\ds{\frac{c^2_2\,e^{2\kappa_2x}\big[c^2_1(\kappa_1-\kappa_2)^2\,
e^{2\kappa_1x}+2\kappa_1(\kappa_1+\kappa_2)^2\big]}{2\kappa_2(\kappa_1+\kappa_2)^2\big[c^2_1\,e^{2\kappa_1x}
+2\kappa_1\big]}}.
\end{equation}
Then, using \eqref{5.6}, \eqref{5.22}, \eqref{5.24}, and \eqref{5.25} in \eqref{3.22}, we get $\tilde\alpha(x,y)$ as
\begin{equation}
\label{5.26}
\tilde\alpha(x,y)=-\frac{Q_1+Q_2}{Q_3},
\end{equation}
where we have defined
\begin{equation*}
Q_1:=4(\kappa_1+\kappa_2)^2\kappa_1\,\kappa_2\left[c^2_1\,e^{\kappa_1(x+y)}+
c^2_2\,e^{\kappa_2(x+y)}\right],
\end{equation*}
\begin{equation*}
Q_2:=2(\kappa^2_1-\kappa^2_2)\,c^2_1\,c^2_2\left[\kappa_1e^{2\kappa_2x+\kappa_1(x+y)}-\kappa_2
e^{2\kappa_1x+\kappa_2(x+y)}\right],
\end{equation*}
\begin{equation*}
Q_3:=2\kappa_1(\kappa_1+\kappa_2)^2\left(c^2_2\,e^{2\kappa_2x}+2\kappa_2\right)
+c^2_1\,e^{2\kappa_1x}\left[c^2_2\,e^{2\kappa_2x}
(\kappa_1-\kappa_2)^2+2\kappa_2(\kappa_1+\kappa_2)^2\right].
\end{equation*}
Having $\tilde\omega(x,y)$ given in \eqref{5.26} at hand, we can construct the perturbed wavefunction $\tilde f_r(k,x)$
by using \eqref{5.26} in the second equality of \eqref{3.46}.
We obtain
\begin{equation*}
\tilde f_r(k,x)=e^{-i\,k\,x}\left[1+\frac{Q_4+Q_5}{(k+i\kappa_1)(k+i\kappa_2)\,Q_6}\right],
\end{equation*}
where we have let
\begin{equation*}
Q_4:=-2ik\,c^2_1\,c^2_2\,(\kappa_1+\kappa_2)(\kappa_1-\kappa_2)^2\,e^{2(\kappa_1+\kappa_2)x},
\end{equation*}
\begin{equation*}
Q_5:=4\kappa_1\,\kappa_2(\kappa_1+\kappa_2)^2\left[(\kappa_1-ik)c^2_2\,e^{2\kappa_2x}+(\kappa_2-ik)c^2_1\,e^{2\kappa_1x}\right],
\end{equation*}\begin{equation*}
Q_6:=c^2_1\,c^2_2(\kappa_1-\kappa_2)^2e^{2(\kappa_1+\kappa_2)x}
            +2(\kappa_1+\kappa_2)^2\left[\kappa_1\,c^2_2\,e^{2\kappa_2x}+\kappa_2\,c^2_1
\,e^{2\kappa_1x}+2\kappa_1\,\kappa_2\right].
\end{equation*}
In this case, the potential perturbation $\tilde u(x)-u(x)$ is related to
the quantity $\tilde\alpha(x,x)-\alpha(x,x)$ as
\begin{equation}
\label{5.27}
\tilde u(x)-u(x)=2\frac{d}{dx}\left[\tilde\alpha(x,x)-\alpha(x,x)\right].
\end{equation}
Hence, using \eqref{5.6}, \eqref{5.19}, and \eqref{5.26} in 
\eqref{5.27}, we construct the perturbed potential $\tilde u(x)$ as
\begin{equation*}
\tilde u(x)=\frac{16(\kappa_1+\kappa_2)^2\,(Q_7+Q_8)}{Q_9^2},
\end{equation*}
where we have defined
\begin{equation*}
Q_7:=
   4 c_2^2\,\kappa_1^2\,  \kappa_2^3 \,(\kappa_1 + \kappa_2)^2\,e^{2 \kappa_2 x} +
c_1^4 \,c_2^2\, \kappa_2^3 \,(\kappa_1-\kappa_2)^2\,e^{(4 \kappa_1 +2 \kappa_2) x},
\end{equation*}
\begin{equation*}
Q_8:=
4 c_1^2 \,\kappa_1^3\, \kappa_2^2\,(\kappa_1+\kappa_2)^2\,e^{2\kappa_1 x} +
4 c_1^2 \,c_2^2\,\kappa_1\, \kappa_2\,(\kappa_1^2-\kappa_2^2)^2\,e^{2(\kappa_1+\kappa_2) x}+
c_1^2 \,c_2^4\,\kappa_1^3\, (\kappa_1-\kappa_2)^2\,e^{(2\kappa_1 +4 \kappa_2) x},
\end{equation*}
\begin{equation*}
Q_9:=2\kappa_1(\kappa_1+\kappa_2)^2\left(c^2_2\,e^{2\kappa_2x}+2\kappa_2\right)+c^2_1\,e^{2\kappa_1x}
\left[c^2_2\,e^{2\kappa_2x}(\kappa_1-\kappa_2)^2+2\kappa_2(\kappa_1+\kappa_2)^2\right].
\end{equation*}

\end{example}

Having illustrated our generalized method for the Darboux transformation on the interval $(-\infty,x),$
we now turn to the demonstration of our method on the interval $(0,x).$ In the next example,
we illustrate the construction of the solution $\alpha(x,y)$ to the unperturbed fundamental integral equation \eqref{4.1}.

\begin{example}
\label{example5.6}
\normalfont
Let us illustrate the recovery of the solution $\alpha(x,y)$ to the integral equation \eqref{4.1}
corresponding to the operator $\Omega$ with the kernel specified as
\begin{equation}
\label{5.28}
\omega(x,y)=\frac{C^2_1}{\kappa^2_1}\,\sinh(\kappa_1x)\,\sinh(\kappa_1y).
\end{equation}
Since the kernel given in \eqref{5.28} is separable in $x$ and $y,$ we can explicitly solve the integral
equation \eqref{4.1} with input \eqref{5.28}, and we evaluate its solution as
\begin{equation}
\label{5.29}
\alpha(x,y)=-\ds\frac{\left( C^2_1/\kappa^2_1\right) \sinh(\kappa_1x)\,\sinh(\kappa_1y)}{1+\left(C_1^2/\kappa^2_1\right)
\big[\left(1/(4\kappa_1) \right)\sinh(2\kappa_1x)-x/2\big]
}.
\end{equation}

\end{example}

In the next example, we illustrate Theorem~\ref{theorem4.1}(b), namely the construction of the
resolvent kernel $r(x;z,y)$ appearing in \eqref{4.4} by using the solution $\alpha(x,y)$ to the integral equation \eqref{4.1}.

\begin{example}
\label{example5.7}
\normalfont
On the interval $(0,x),$ in order to illustrate the construction of the resolvent kernel $r(x;z,y)$ from \eqref{4.4},
in this example
we use the quantity $\alpha(x,y)$ in \eqref{5.29} as input. Since $\alpha(x,y)$ in \eqref{5.29}
is real valued and a scalar, we observe that $J\,\alpha(x,y)^\dagger J$ appearing in \eqref{4.4} is equal to 
$\alpha(x,y)$ itself whether we use $J=1$ or $J=-1.$ 
Thus, in this case, \eqref{4.4} is equivalent to
\begin{equation}
\label{5.30}
r(x;z,y)=\begin{cases}\alpha(z,y)+\ds\int_z^x ds\,\alpha(s,z)\,\alpha(s,y), \qquad 0<y<z<x,\\
\noalign{\medskip}
\,\alpha(y,z)+\ds\int_y^x ds\,\alpha(s,z)\,\alpha(s,y),\qquad 0<z<y<x.\end{cases}
\end{equation}
Using \eqref{5.29} in the first line of \eqref{5.30}, for $0<y<z<x$ we obtain
\begin{equation}
\label{5.31}
\begin{split}
r(x;z,y)=&
-\ds\frac{\left(C_1^2/\kappa^2_1\right) \sinh(\kappa_1 z)\,\sinh(\kappa_1 y)}{1+\left(C_1^2/\kappa^2_1\right)
\big[\left(1/(  4\kappa_1) \right)\sinh(2\kappa_1z)-z/2\big]}
\\
&+\ds\int_z^x ds\,\ds\frac{\left(C^4_1/\kappa^4_1\right) \,\sinh^2(\kappa_1s)\,\sinh(\kappa_1z)\sinh(\kappa_1y)}
{\big[1+\left(C_1^2/\kappa^2_1\right)
\left[\left(1/(  4\kappa_1) \right)\sinh(2\kappa_1 s)-s/2\right]\big]^2}.
\end{split}
\end{equation}
The integral on the right-hand side of \eqref{5.31} can be explicitly evaluated, and from \eqref{5.31} we get
\begin{equation}
\label{5.32}
 r(x;z,y)=-\ds{\frac{4C^2_1\,\kappa_1\,\sinh(\kappa_1z)\sinh(\kappa_1y)}{4\kappa^3_1-2C^2_1\,\kappa_1x+C^2_1\sinh(2\kappa_1x)}},
\end{equation}
for $0<y<z<x.$ 
Similarly, by using \eqref{5.29} in the second line of \eqref{5.30}, for $0<z<y<x$ we obtain $r(x;z,y).$
We determine that the expression for $r(x;z,y)$ in \eqref{5.32} holds both for $0<y<z<x$ and $0<z<y<x.$
\end{example}

In the next example, we illustrate our generalized method on the interval $(0,x)$ for the Darboux
transformation on the half-line Schr\"odinger equation with the Dirichlet boundary condition.

\begin{example}
\label{example5.8}
\normalfont
Consider the half-line Schr\"odinger equation in \eqref{4.14} with the Dirichlet boundary condition \eqref{4.15}. 
Let us assume that the unperturbed kernel $\omega(x,y)$ is specified as the quantity appearing in \eqref{5.28}.
In \eqref{5.29} of Example~\ref{example5.6} we have obtained the corresponding $\alpha(x,y),$ and in \eqref{5.32}
of Example~\ref{example5.7} we have determined the corresponding resolvent kernel $r(x;z,y).$
Using \eqref{5.29} as input to \eqref{4.30}
we construct the corresponding unperturbed wavefunction
$\varphi(k,x)$ as
\begin{equation}
\label{5.33}
 \varphi(k,x)=\ds\frac{\sin(kx)}{k}-\ds{\frac{4C^2_1\,\kappa_1\,\sinh^2(\kappa_1x)\big[\kappa_1\,\coth(\kappa_1x)\,\sin(kx)-k
\,\cos(kx)\big]}{k(k^2+\kappa^2_1)\big[4\,\kappa^3_1-2C^2_1\,\kappa_1\,x+C^2_1\,
\sinh(2\kappa_1x)\big]}}.
\end{equation}
As indicated in Section~\ref{section4}, $\varphi(k,x)$ is the regular solution to \eqref{4.14} satisfying \eqref{4.16}.
Using \eqref{5.33} in \eqref{4.14}, we evaluate the unperturbed potential $u(x)$ as
\begin{equation}
\label{5.34}
u(x)=\ds\frac{32C^2_1\,\kappa^2_1\,\sinh(\kappa_1x)\big[\left(-2\kappa_1^3+C_1^2\,\kappa_1x\right)\cosh(\kappa_1x)-C_1^2\,
\sinh(\kappa_1x)\big]}{\big[4\kappa_1^3-2C_1^2\,\kappa_1x+C_1^2\,\sinh(2\kappa_1x)\big]^2}.
\end{equation}
Let us add a bound state at $k=i\kappa_2$ with the norming constant $C_2$
to the unperturbed problem. This corresponds to the perturbation
\eqref{2.11}, which is given by
\begin{equation}
\label{5.35}
\tilde\omega(x,y)-\omega(x,y)=
\frac{C^2_2}{\kappa^2_2}\,\sinh(\kappa_2x)\,
\sinh(\kappa_2y),
\end{equation}
where we recall that $\omega(x,y)$ is the quantity in \eqref{5.28}.
Comparing \eqref{2.11} and \eqref{5.35}, we see that the quantities $f(x)$ and $g(y)$ in \eqref{2.11} can be chosen as
\begin{equation}
\label{5.36}
f(x)=\frac{C_2}{\kappa_2}\,\sinh(\kappa_2x), \quad  g(y)=\frac{C_2}{\kappa_2}\,\sinh(\kappa_2y).
\end{equation}
Using $\alpha(x,y)$ from \eqref{5.29} and $f(x)$ and $g(y)$ from \eqref{5.36}, we construct the quantities $n(x)$ and
$q(y)$ defined in \eqref{4.5} and \eqref{4.6}, respectively, as
\begin{equation}
\label{5.37}
n(x)=\frac{C_2\sinh(\kappa_2x)}{\kappa_2}-\frac{4C_1^2\,C_2\,\kappa_1\,\sinh(\kappa_1x)\big[\kappa_1\cosh(\kappa_1x)\sinh(\kappa_2x)
-\kappa_2\sinh(\kappa_1x)\cosh(\kappa_2x)\big]}
{(\kappa_1^2-\kappa_2^2)\big[4\kappa_1^3\,\kappa_2-2C_1^2\,\kappa_1\,\kappa_2\,x+C_1^2\,\kappa_2\,\sinh(2\kappa_1x)\big]},
\end{equation}
\begin{equation}
\label{5.38}
q(y)=\frac{C_2\sinh(\kappa_2y)}{\kappa_2}-\frac{4C_1^2\,C_2\,\kappa_1\,\sinh(\kappa_1y)\big[\kappa_1\cosh(\kappa_1y)\sinh(\kappa_2y)
-\kappa_2\sinh(\kappa_1y)\cosh(\kappa_2y)\big]}
{(\kappa_1^2-\kappa_2^2)\big[4\kappa_1^3\,\kappa_2-2C_1^2\,\kappa_1\,\kappa_2\,y+C_1^2\,\kappa_2\,\sinh(2\kappa_1y)\big]}.
\end{equation}
Similarly, using \eqref{5.29} and \eqref{5.38}, we obtain the quantity $\tilde g(x,y)$ of \eqref{4.11} as
\begin{equation}
\label{5.39}
\tilde g(x,y)=\frac{Q_{10}+Q_{11}}{Q_{12}},
\end{equation}
where we have let
\begin{equation*}
Q_{10}:=4C^2_1\,C_2\,\kappa_1\sinh(\kappa_1y) \left[\kappa_2\,\sinh(\kappa_1x)\cosh(\kappa_2x)
-\kappa_1\,\cosh(\kappa_1x)\sinh(\kappa_2x)\right],
\end{equation*}
\begin{equation*}
Q_{11}:=C_2\left(\kappa^2_1-\kappa^2_2\right)\sinh(\kappa_2y)
\left[4\kappa^3_1-2\,C^2_1\,\kappa_1x+C^2_1\sinh(2\kappa_1x)\right],
\end{equation*}
\begin{equation*}
Q_{12}:=\kappa_2\left(\kappa_1^2-\kappa^2_2\right)\left[4\kappa^3_1-2\,C^2_1\,\kappa_1x+C^2_1\sinh(2\kappa_1x)\right].
\end{equation*}
Next, using \eqref{5.36} and \eqref{5.39} we construct the quantity $\Gamma(x)$ of \eqref{4.9} as
\begin{equation}
\label{5.40}
\Gamma(x)=1-\frac{Q_{13}+Q_{14}+Q_{15}}{Q_{16}},
\end{equation}
where we have defined
\begin{equation*}
Q_{13}:=16C^2_1\,C^2_2\,\kappa_1\,\kappa_2\left[
\kappa_2^2 \sinh^2(\kappa_1x)\cosh^2(\kappa_2x)+\kappa_1^2
\,\cosh^2(\kappa_1x)\sinh^2(\kappa_2x)\right],
\end{equation*}
\begin{equation*}
Q_{14}:=2C^2_2\,\kappa_1\left(\kappa^2_1-\kappa^2_2\right)^2\left(2\kappa^2_1-C^2_1x\right)\left[2\kappa_2x
-\sinh(2\kappa_2x)\right],
\end{equation*}
\begin{equation*}
Q_{15}:=C^2_1\,C^2_2\,\sinh(2\kappa_1x)\left[2\kappa_2\left(\kappa^2_1-\kappa^2_2\right)^2 x
-\left(\kappa^4_1+6\kappa^2_1\kappa^2_2+\kappa^4_2\right)\sinh(2\kappa_2x)\right],
\end{equation*}
\begin{equation*}
Q_{16}:=4\kappa^3_2\left(\kappa_1^2-\kappa_2^2\right)^2\left[4\kappa^3_1-2
\,C^2_1\,\kappa_1x+C^2_1\sinh(2\kappa_1x)\right].
\end{equation*}
Then, using \eqref{5.29}, \eqref{5.37}, \eqref{5.39}, and \eqref{5.40} in \eqref{4.12}, we obtain the quantity
$\tilde\alpha(x,y)$ as
\begin{equation}
\label{5.41}
\tilde\alpha(x,y)=\ds{\frac{-Q_{17}+(Q_{18}+Q_{19})\,(Q_{20}+Q_{21})}{\big[4\kappa^3_1-2\,C^2_1\kappa_1x+
C^2_1\,\sinh(2\kappa_1x)\big]\big[Q_{13}-Q_{22}-Q_{23}\big]}},
\end{equation}
where we have let
\begin{equation*}
Q_{17}:=4C^2_1\,\kappa_1\,\sinh(\kappa_1x)\,\sinh(\kappa_1y),
\quad
Q_{18}:=16C^2_1\,C^2_2\,\kappa_1\,\kappa_2^2\,\sinh^2(\kappa_1x)\cosh(\kappa_2x),
\end{equation*}
\begin{equation*}
Q_{19}:=\sinh(\kappa_2x)\left[8C^2_2\,\kappa_1\,\kappa_2\,(\kappa^2_1-\kappa^2_2)(2\kappa^2_1-C^2_1x)
-4C_1^2\,C^2_2\,\kappa_2(\kappa^2_1+\kappa^2_2)\sinh(2\kappa_1x)\right],
\end{equation*}
\begin{equation*}
Q_{20}:=4C^2_1\,\kappa_1\,\sinh(\kappa_1y)\left[\kappa_2\,\sinh(\kappa_1x)\cosh(\kappa_2x)
-\kappa_1\,\cosh(\kappa_1x)\sinh(\kappa_2x)\right],
\end{equation*}
\begin{equation*}
Q_{21}:=\left(\kappa^2_1-\kappa^2_2\right)\sinh(\kappa_2y)\left[4\kappa^3_1-2C^2_1\kappa_1x+C^2_1\sinh(2\kappa_1x)\right],
\end{equation*}
\begin{equation*}
 Q_{22}:=2\kappa_1(\kappa^2_1-\kappa^2_2)^2(2\kappa^2_1-C^2_1x)\left[4\kappa^3_2-2C^2_2\kappa_2x+C^2_2\sinh(2\kappa_2x)\right].
\end{equation*}
%\begin{equation*}
%Q_{22}:=\,16C^2_1\,C^2_2\,\kappa_1\,\kappa^3_2\,\cosh^2(\kappa_2x)\sinh^2(\kappa_1x),
%\quad
%Q_{23}:=16C^2_1\,C^2_2\,\kappa^3_1\,\kappa_2\,\cosh^2(\kappa_1x)\,\sinh^2(\kappa_2x),
%\end{equation*}
\begin{equation*}
Q_{23}:=C^2_1\sinh(2\kappa_1x)\left[2\kappa_2(\kappa^2_1-\kappa^2_2)^2(2\kappa^2_2-C^2_2x)+C^2_2(\kappa^4_1+
6\kappa^2_1\,\kappa^2_2+\kappa^4_2)\sinh(2\kappa_2x)\right],
\end{equation*}
%\begin{equation*}
% Q_{25}:=2\kappa_1(\kappa^2_1-\kappa^2_2)^2(2\kappa^2_1-C^2_1x)\left[4\kappa^3_2-2C^2_2\kappa_2x+C^2_2\sinh(2\kappa_2x)\right].
%\end{equation*}
Using $\tilde\alpha(x,y)$ given in \eqref{5.41} as input to \eqref{4.31},
we evaluate the resulting integral in \eqref{4.31} explicitly in a closed form, and hence we obtain the
perturbed wavefunction $\tilde\varphi(k,x)$ explicitly in terms of elementary functions. Since the resulting
explicit expression is very lengthy,
we do not quote it here but mention that a symbolic
software such as Mathematica displays
that lengthy expression explicitly.
We then evaluate $\tilde\alpha(x,x)-\alpha(x)$ by using $y=x$ in \eqref{5.29} and \eqref{5.41}.
The use of that resulting explicit expression in \eqref{5.27} yields
the potential perturbation $\tilde u(x)-u(x).$ Since $u(x)$ is already known explicitly, 
we also obtain the perturbed potential $\tilde u(x)$ explicitly in terms of elementary functions.
The resulting
explicit expression for $\tilde u(x)$ is also very lengthy,
and hence we do not display it here but mention again that it can be displayed
explicitly with the help of Mathematica. 

\end{example}

In the next example, we illustrate our generalized method on
 the interval
$(0,x)$ by considering
the Darboux transformation for 
the half-line Schr\"odinger equation \eqref{4.14} with a non-Dirichlet boundary condition.

\begin{example}
\label{example5.9}
\normalfont
In this example, we obtain the Darboux transformation at the potential and wavefunction levels
by adding one bound state to the half-line Schr\"odinger equation \eqref{4.14}
with the non-Dirichlet boundary condition \eqref{4.33} with $\cot\theta=-\kappa_1.$
As input to the unperturbed fundamental integral equation \eqref{4.1}, let us use
the operator kernel $\omega(x,y)$ given by
\begin{equation}
\label{5.42}
\omega(x,y)=-\ds\frac{\kappa_1}{2}\big[e^{-\kappa_1(x+y)}+e^{-\kappa_1(x-y)}\big],\qquad 0< y< x,
\end{equation}
where $\kappa_1$ is a positive parameter. Using \eqref{5.42} in \eqref{4.1}, we evaluate the solution $\alpha(x,y)$
to \eqref{4.1}
as
\begin{equation}
\label{5.43}
\alpha(x,y)=\kappa_1,\qquad 0< y<x.
\end{equation}
Let us use \eqref{5.43} in \eqref{4.4} in order to determine the corresponding resolvent kernel
$r(x;z,y).$
Since the quantity $\alpha(x,y)$ in \eqref{5.43}
is real valued and a scalar, as in Example~\ref{example5.7} we know that 
\eqref{4.4} is equivalent to \eqref{5.30}. Hence, using \eqref{5.43} in \eqref{5.30}, we obtain
the resolvent kernel
$r(x;z,y)$ as
\begin{equation}
\label{5.44}
r(x;z,y)=\begin{cases}\kappa_1+\kappa_1^2(x-z), \qquad 0< y< z<x,\\
\noalign{\medskip}
\kappa_1+\kappa_1^2(x-y),\qquad 0< z< y<x.\end{cases}
\end{equation}
One can directly verify that \eqref{4.3} holds when we use as input to \eqref{4.3} 
the
quantities $\omega(x,y),$ $\alpha(x,y),$ and $r(x;z,y)$
given in \eqref{5.42}, \eqref{5.43}, and \eqref{5.44}, respectively.
On the other hand, analogous to the illustration in Example~\ref{example5.2}, let us remark that the integral
equation \eqref{4.3}
may still be satisfied if we use as input to \eqref{4.3} the quantities
$\omega(x,y)$ of \eqref{5.42} and $\alpha(x,y)$ of
\eqref{5.43} and some quantity for $r(x;z,y)$
other than \eqref{5.44}. For example, let us consider the function
\begin{equation}
\label{5.45}
r(x;z,y)=\ds\frac{\kappa_1}{\sinh(\kappa_1x)}\big[e^{\kappa_1x}-\cosh(\kappa_1y)\big],
\end{equation}
which is not the resolvent kernel corresponding to \eqref{5.42} and \eqref{5.43}.
One can verify that \eqref{4.3} remains satisfied 
when the triplet \eqref{5.42}, \eqref{5.43}, \eqref{5.45} is used as input instead of the triplet
\eqref{5.42}, \eqref{5.43},  \eqref{5.44}.
In this example, the relevant unperturbed wavefunction $\varphi(k,x)$ related to
$\alpha(x,y)$ of \eqref{5.43} is obtained from the first equality of \eqref{4.36}, and we get
\begin{equation}
\label{5.46}
\varphi(k,x)=\cos(kx)+\kappa_1\,\frac{\sin(kx)}{k}.
\end{equation}
The unperturbed wavefunction $\varphi(k,x)$ in \eqref{5.46}
is the regular solution 
to \eqref{4.14} satisfying the initial conditions in \eqref{4.34} with
$\cot\theta=-\kappa_1,$ and it corresponds to
the zero potential $u(x)\equiv 0$ without any bound states.
The addition of a bound state to the unperturbed problem at $k=i\kappa_1$ with the norming constant $C_1$
is accomplished by choosing the perturbation kernel $\tilde\omega(x,y)-\omega(x,y)$ given
in \eqref{2.11} as
\begin{equation}
\label{5.47}
\tilde\omega(x,y)-\omega(x,y)=C_1^2\,\cosh(\kappa_1x)\,\cosh(\kappa_1y).
\end{equation}
Comparing \eqref{2.11} and \eqref{5.47}, we see that the quantities $f(x)$ and $g(y)$ in \eqref{2.11} can be chosen as
\begin{equation}
\label{5.48}
f(x)=C_1 \cosh(\kappa_1x), \quad  g(y)=C_1 \cosh(\kappa_1y).
\end{equation}
Using $\alpha(x,y)$ from \eqref{5.43} and $f(x)$ and $g(y)$ from \eqref{5.48}, we construct the 
intermediate quantities $n(x)$ and
$q(y)$ defined in \eqref{4.5} and \eqref{4.6}, respectively, as
\begin{equation}
\label{5.49}
n(x)=C_1\cosh(\kappa_1x)+C_1 \sinh(\kappa_1x),
\end{equation}
\begin{equation}
\label{5.50}
q(y)=C_1 \cosh(\kappa_1y)+C_1 \sinh(\kappa_1y).
\end{equation}
Then, we construct the intermediate quantity $\tilde g(x,y)$ 
by using \eqref{5.43} and \eqref{5.50} in \eqref{4.11}, and we obtain
\begin{equation}
\label{5.51}
\tilde g(x,y)=C_1\,e^{\kappa_1x},
\end{equation}
which does not depend on $y.$
Next, using \eqref{5.48} and \eqref{5.51} in \eqref{4.9}, we construct $\Gamma(x)$ as
\begin{equation}
\label{5.52}
\Gamma(x)=1+\ds\frac{C_1^2\,e^{\kappa_1x}\sinh(\kappa_1x)}{\kappa_1}.
\end{equation}
Finally, with the help of \eqref{5.43}, \eqref{5.49}, \eqref{5.51}, and \eqref{5.52}, from \eqref{4.12} we obtain
$\tilde\alpha(x,y)$ as
\begin{equation}
\label{5.53}
\tilde\alpha(x,y)=\ds\frac{\kappa_1\big[\kappa_1-C_1^2\,e^{\kappa_1x}\cosh(\kappa_1x)\big]}
{\kappa_1+C_1^2\,e^{\kappa_1x}\sinh(\kappa_1x)},\qquad 0<y<x,
\end{equation}
which is also independent of $y$ as a result of the $y$-independence in \eqref{5.51}.
The perturbed wavefunction $\tilde\varphi(k,x)$ is obtained by using
\eqref{5.53} in the second equality of \eqref{4.36}, and we have
\begin{equation}
\label{5.54}
\tilde\varphi(k,x)=\cos(kx)+\ds\frac{\sin(kx)}{k}\,\ds\frac{\kappa_1\,\big[\kappa_1-C_1^2\,e^{\kappa_1x}\cosh(\kappa_1x)\big]}{\kappa_1+C_1^2
\,e^{\kappa_1x}\sinh(\kappa_1x)},
\end{equation}
which satisfies the initial conditions
\begin{equation*}
\tilde\varphi(k,0)=1,\quad \tilde\varphi'(k,0)=\kappa_1-C_1^2.
\end{equation*}
Hence, for the corresponding perturbed Schr\"odinger operator, the value of $\cot\tilde\theta$ appearing in \eqref{4.35} is equal to $C_1^2-\kappa_1.$
In this case, the perturbation $\tilde u(x)-u(x)$ of the potential is obtained via \eqref{5.27}.
Using \eqref{5.43}, \eqref{5.53}, and the fact that $u(x)\equiv 0,$ from \eqref{5.27} we obtain the perturbed potential as
\begin{equation}
\label{5.55}
\tilde u(x)=\ds\frac{2C_1^2\,\kappa_1^2\,e^{2\kappa_1x}(C_1^2-2\kappa_1)}{\big[\kappa_1+C_1^2\,e^{\kappa_1x}\sinh(\kappa_1x)\big]^2}.
\end{equation}
From \eqref{5.54} and \eqref{5.55} we observe the following.
If the norming constant $C_1$ is chosen as $\sqrt{2 \kappa_1},$
then the perturbed potential is given by $\tilde u(x)\equiv 0$ and the perturbed wavefunction
$\tilde\varphi(k,x)$ becomes
\begin{equation*}
\tilde\varphi(k,x)=\cos(kx)-\kappa_1\,\frac{\sin(kx)}{k}.
\end{equation*}
This illustrates the fact that, for the Darboux transformation for the half-line Schr\"odinger equation, the relevant wavefunction to use
cannot be a wavefunction specified by any asymptotic condition at $x=+\infty.$ This is because such a wavefunction remains unchanged
under the Darboux transformation even though a bound state is added. In particular, the half-line Jost solution
to \eqref{4.15} remains unchanged under this particular Darboux transformation.
The relevant wavefunction to use in this case
must be a wavefunction specified by some initial conditions at $x=0.$
\end{example}

\end{document}